\newtheorem{theorem}{Theorem}
\newtheorem{remark}[theorem]{Remark}
\newtheorem{lemma}[theorem]{Lemma}
\newtheorem{example}[theorem]{Example}
\newtheorem{corollary}[theorem]{Corollary}
\theoremstyle{definition}
\newtheorem{definition}[theorem]{Definition}
\def\E{\mathbb{E}}
\def\calC{\mathcal{C}}
\def\calA{\mathcal{A}}
\def\calP{\mathcal{P}}
\def\epsilon{\varepsilon}
\def\calG{\mathcal{G}}
\def\calF{\mathcal{F}}
\def\calD{\mathcal{D}}
\def\calE{\mathcal{E}}
\def\emm{\mathrm{e}}
\newcommand{\floor}[1]{\left\lfloor #1 \right\rfloor}
\newcommand{\ceil}[1]{\left\lceil #1 \right\rceil}
\newcommand{\ored}[0]{\Omega^{r}_{G, q}}
\newcommand{\zred}[0]{Z_{G, q, \beta}^{r}}
\newcommand{\tmix}[0]{\tau_{\text{mix}}}
\newcommand*\bell{\ensuremath{\boldsymbol\ell}}
\newcommand*\cm{\mathsf{CM}}
\newlist{iindent}{description}{2}
\setlist[iindent]{labelwidth=0.7cm}
 \title{Fast mixing via polymers for random graphs with unbounded degree}
\author{Andreas Galanis
\thanks{A preliminary short version of this paper (without proofs) will appear in the proceedings of RANDOM 2021. Authors' address: Department of Computer Science, University of Oxford, Wolfson Building, Parks Road, Oxford, OX1~3QD, UK.}
\and
Leslie Ann Goldberg$^*$
\and
James Stewart$^*$}
\date{9 March 2022}
\begin{document}

\maketitle{}
\begin{abstract}
The polymer model framework is a classical tool from statistical mechanics that has recently been used
to obtain approximation algorithms for spin systems on classes of bounded-degree graphs; examples include the ferromagnetic Potts model on expanders and on the grid. One of the key ingredients in the analysis of polymer models is controlling the growth rate of the number of polymers, which has been typically achieved so far by invoking the bounded-degree assumption. Nevertheless, this assumption is often restrictive and obstructs the applicability of the method to more general graphs. For example, sparse random graphs typically have bounded average degree and good expansion properties, but they include vertices with unbounded degree, and therefore are excluded from the current polymer-model framework.

We develop a less restrictive framework for polymer models that relaxes the standard bounded-degree assumption, by reworking the relevant polymer models from the edge perspective. The edge perspective allows us to bound the growth rate of the number of polymers in terms of the total degree of polymers, which in turn can be related more easily to the expansion properties of the underlying graph. To apply  our methods, we consider  random graphs with unbounded degrees from a fixed degree sequence (with minimum degree at least~$3$) and obtain approximation algorithms for the ferromagnetic Potts model, which is a standard benchmark for polymer models. Our techniques also extend to more general spin systems. 
\end{abstract}

\SetGraphUnit{1}
\GraphInit[vstyle = Simple]
\tikzset{Vtx/.style = {shape = circle, fill = black, minimum size = 6pt, inner sep = 0pt}}

\section{Introduction}

The polymer model framework~\cite{kotecky1986cluster, gruber1971general} is a classical tool from statistical mechanics which has recently been used to obtain efficient approximation algorithms for analysing spin systems (such as the Potts model) in  parameter regimes   where standard algorithmic approaches are provably inefficient/inaccurate on general graphs. These algorithms  apply to certain classes of graphs 
that typically have sufficiently strong expansion properties 
relative to their local growth rates. Typically, the local growth rate is restricted by a bounded-degree assumption.  Examples of such classes include 
bounded-degree expanders~\cite{jenssen2020algorithms, liao2019counting, chen2019fast, cannon2020counting, carlson2020efficient, barvinok2019weighted, galanis_et_al, helmuth2020finite} and
the  $d$-dimensional grid~\cite{helmuth2020algorithmic, borgs2020efficient,jenssen2020algorithms,flows}. The purpose of this work is to expand the current framework for applying polymer models by relaxing the bounded-degree assumption and using alternative methods to capture the growth of the graph.

To briefly review the current framework, we use as a running example the $q$-state ferromagnetic Potts model with parameter $\beta>0$. For a graph $G=(V_G,E_G)$, the set $\Omega_{G,q}$ of configurations of the model is the set of all (not necessarily proper)  $q$-colourings $\sigma$ of~$V_G$ using the set of colours $[q]=\{1,\hdots,q\}$ where $q\geq 3$. The weight of a colouring $\sigma$ is $w(\sigma)=\emm^{\beta m_G(\sigma)}$ where $m_G(\sigma)$ is the number of monochromatic edges under $\sigma$. The so-called partition function $Z=Z_{G,q,\beta}$ is the aggregate weight of all $\sigma$ and the Gibbs distribution $\mu=\mu_{G,q,\beta}$ is the probability distribution on the set of all $\sigma$, in which each $\sigma$ has mass proportional to its weight, i.e., $\mu(\sigma)=w(\sigma)/Z$. We will study the computational problems of approximating the partition function and approximately sampling from the Gibbs distribution.  In general, these problems are computationally hard (\#BIS-hard) when the parameter $\beta$ is sufficiently large \cite{goldberg2012,galanis2016}.

The recent works~\cite{helmuth2020algorithmic,jenssen2020algorithms} introduced a framework based on polymer models that bypasses the  worst-case hardness, on classes of bounded-degree graphs with expansion properties. The rough intuition for the Potts model is that, for large $\beta$, due to the expansion properties, the colourings with non-negligible weight are close to the so-called ground-states of the model, i.e., the $q$ configurations in which all vertices get the same colour. Polymer models capture the deviation of configurations from these ground states. Given a ground state with colour $r$, a polymer is a connected set of vertices, none of which is coloured with $r$, and a polymer configuration (with respect to the ground state~$r$) corresponds to the set of all polymers (see Example~\ref{ex:Potts} for more details). The Potts model can then be decomposed into $q$ polymer models, each of which can be studied using relatively streamlined algorithmic methods (based on  interpolation \cite{barvinok2016combinatorics} and Markov chains). This framework has already found multiple algorithmic applications in far more general settings  \cite{helmuth2020algorithmic, borgs2020efficient,flows,jenssen2020algorithms, jenssen2020homomorphisms, liao2019counting, chen2019fast, friedrich2021spectral, cannon2020counting, carlson2020efficient, galanis_et_al, helmuth2020finite}.

Despite these advances, the current applications of polymer models rely crucially on the fact that the maximum degree of the underlying graph is bounded. This fact is used   to control the number of polymers of a given size (which is crucially needed for the algorithmic analysis). As a result of this limitation, applications to several other interesting classes of graphs are ruled out, excluding for example sparse random graphs, which  have bounded average degree and good expansion properties, but include vertices with unbounded degree.

\subsection{Main Results}

In this paper, we propose a framework for polymer models that overcomes the bounded-degree limitations of previous algorithms, by revisiting the Markov chain approach of~\cite{chen2019fast}. We introduce a new condition which requires that the weight of each polymer   decays exponentially in its total degree (the sum of the
degrees of the vertices in the polymer)
instead of decaying exponentially in the polymer's size. 
This new condition allows us to prove rapid mixing for a Markov chain which is an adapted edge-version of the so-called polymer dynamics of~\cite{chen2019fast}. Crucially, the fact that the new condition is formulated in terms of the total degree of a polymer allows us to relax the assumption that the instance has
bounded degree.

As an application of our method, we consider the $q$-state ferromagnetic Potts model on sparse random graphs of unbounded degree with a given degree sequence, as detailed below.
\begin{definition}
\label{def:degseq}
Let $d$ be a positive real number and $n$ be a positive integer. We define $\calD_{n, d}$ to be the set of all degree sequences $\{x_1, x_2, ..., x_n \}$ that satisfy
\begin{enumerate}
\item \label{item:i} 
For all $i\in [n]$,
$3 \leq x_i \leq n^{\rho}$ where $\rho=\tfrac{1}{50}$, and
\item \label{item:ii} $ \sum_{i \in [n]} x_i^2 \leq d n$.
\end{enumerate} 
We write $G \sim \calG(n, \vec{x})$ to indicate that $G$ is a graph chosen uniformly at random from the set of all simple $n$-vertex graphs with  degree sequence $\vec{x}$. We say that $G$ satisfies a property with high probability (whp) if the probability that $G$ satisfies the property is $1-o(1)$, as a function of~$n$, uniformly over~$\vec{x}$. 
\end{definition}
Note that $\calD_{n,d}$ is empty unless $d\geq 9$. The assumption that all degrees are greater than or equal to 3 (rather than 2) guarantees that the random graph $G$ is connected and has good expansion properties. The degree lower bound also means that our results do not apply to Erd\H{o}s-R\'{e}nyi random graphs. The upper bound on the degrees is mild and can in fact be relaxed somewhat further (but in general cannot be made to be linear in $n$ due to the sparsity assumption in Item \ref{item:ii}).

We give an efficient algorithm for approximately sampling\footnote{
A polynomial-time approximate sampling algorithm for $\mu_{G, q, \beta}$ is an algorithm that, given an accuracy parameter $\epsilon > 0$ and a graph $G = (V_G, E_G)$ as input, outputs a sample from a probability distribution that is within total variation distance $\epsilon$ of $\mu_{G, q, \beta}$, in time poly$(|V_G|,1/\epsilon)$.} from
and approximating the partition function\footnote{
Given an accuracy parameter $\epsilon > 0$, we say that $\hat{Z}$ is an $\epsilon$-approximation to the quantity $Z$ if $\emm^{- \epsilon} Z \leq \hat{Z} \leq \emm^\epsilon Z$.
A fully polynomial randomised approximation scheme (FPRAS) for $Z_{G, q, \beta}$ is a randomised algorithm that, given an accuracy parameter $\epsilon > 0$ and a graph $G = (V_G, E_G)$ as input, outputs a random variable that is an $\epsilon$-approximation to $Z_{G, q, \beta}$ with probability at least $3/4$, in time poly$(|V_G|,1/\epsilon)$.} of the  ferromagnetic Potts model on random graphs with a given degree sequence for all sufficiently large $\beta$.

\begin{restatable}{theorem}{randgrsampling}\label{thm:randgrsampling} \label{thm:main}
Let $d$ be a real number and $q\geq 3$ be an integer. 
For the ferromagnetic Potts model, there is $\beta_0$ such that for all $\beta\geq \beta_0$ 
there is a poly-time approximate sampling algorithm for
$\mu_{G,q,\beta}$ and an FPRAS for $Z_{G,q,\beta}$ that work with high probability on random graphs $G\sim \calG(n, \vec{x})$ for
any degree sequence $\vec{x}\in \calD_{n,d}$.
\end{restatable}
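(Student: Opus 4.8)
The plan is to instantiate the general edge-perspective polymer framework of the paper (with the total-degree decay condition) for the ferromagnetic Potts model, and to verify its hypotheses on $G\sim\calG(n,\vec{x})$ using the expansion of $G$. I would organise the argument into three parts, followed by an assembly step.

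\emph{Step 1: expansion of the random graph.} First I would show that, whp over $G\sim\calG(n,\vec{x})$ with $\vec{x}\in\calD_{n,d}$, the graph $G$ is connected and satisfies an edge-expansion bound of the form $|\partial_E S|\ge \epsilon_0\operatorname{vol}(S)$ for every $S\subseteq V_G$ with $\operatorname{vol}(S)\le\operatorname{vol}(V_G)/2$, where $\operatorname{vol}(S)=\sum_{v\in S}\deg_G(v)$ and $\epsilon_0>0$ is an absolute constant. I would prove this in the configuration model $\cm(\vec{x})$, where it is a first-moment computation — a union bound over all candidate sets $S$ grouped by volume, estimating the probability that a fixed $S$ of a given volume has too many internal pairings — and then transfer to the uniform random simple graph using the fact that $\Pr[\cm(\vec{x})\text{ is simple}]=\Omega(1)$, which holds precisely because $\sum_i x_i^2\le dn$. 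The hypothesis that all degrees are at least $3$ is exactly what rules out the long induced paths and cycles that would otherwise destroy expansion, and the ``uniformly over $\vec{x}$'' clause is handled by checking that the first-moment bounds depend on $\vec{x}$ only through $n$, $d$ and the degree range.

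\emph{Step 2: verifying the total-degree condition for Potts.} Fixing a ground-state colour $r\in[q]$, a polymer $\gamma$ is a connected vertex set $S$ together with a colouring of $S$ by colours in $[q]\setminus\{r\}$, with weight relative to the all-$r$ configuration $w(\gamma)=\emm^{-\beta(|\partial_E S|+b_\gamma)}$, where $b_\gamma$ is the number of bichromatic edges inside $S$. Using the identity $|\partial_E S|+b_\gamma=\operatorname{vol}(S)-|E_G[S]|-m_\gamma\ge|\partial_E S|$ (with $m_\gamma$ the number of monochromatic internal edges) together with the expansion bound from Step~1, every polymer of volume at most $\operatorname{vol}(V_G)/2$ satisfies $w(\gamma)\le \emm^{-\beta\epsilon_0\operatorname{vol}(\gamma)}$. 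The second ingredient is a \emph{degree-free} counting bound: the number of polymers $\gamma$ with a fixed vertex $v\in S$ and $\operatorname{vol}(\gamma)=D$ is at most $(q-1)^{D/3}\,2^{D}$, since $2^D$ bounds the number of connected vertex sets of volume $D$ containing $v$ — encode a rooted spanning tree by recording, at each vertex $u$, which of its $\deg_G(u)$ neighbours are tree-children, costing $\prod_u 2^{\deg_G(u)}=2^D$ — and $(q-1)^{D/3}$ bounds the colourings because $|S|\le D/3$. Combining these, for $\beta$ large the sums over polymers appearing in the polymer mixing condition are dominated by polymers of $O(1)$ volume and are as small as required; crucially, neither bound involves the maximum degree of $G$.

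\emph{Step 3: assembling the algorithm.} For each $r$ I would truncate the polymer model by discarding polymers of volume exceeding $c\log n$ for a suitable constant $c=c(\epsilon,\beta)$. Then $Z_{G,q,\beta}=(1\pm\epsilon)\,q\,\emm^{\beta|E_G|}\,\zredr$: every significant Potts configuration is, by the expansion bound of Step~1, associated with a unique plurality colour $r$ whose defect set has volume below $\operatorname{vol}(V_G)/2$ (two distinct ground states cannot both have small defect, and a configuration balanced away from all ground states has weight $\emm^{-\Omega(\beta n)}$), and within a fixed $r$ the contribution of configurations containing a polymer of volume in $(c\log n,\operatorname{vol}(V_G)/2]$ is $n^{-\Omega(\beta c)}$ by the standard polymer-model truncation estimate using Step~2. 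Since there are only $C^{c\log n}=\mathrm{poly}(n)$ polymers of volume at most $c\log n$ through any vertex, the edge-version of the polymer dynamics of~\cite{chen2019fast} can be implemented in polynomial time on the truncated state space $\oredr$; its stationary distribution is the truncated polymer Gibbs measure, and the total-degree condition from Step~2 drives a path-coupling argument showing mixing in $O(n\log n)$ steps. The sampler then picks $r\in[q]$ uniformly, runs the chain, and outputs the induced colouring; the FPRAS follows from the sampler by the standard telescoping-product reduction for $\zredr$.

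\emph{Main obstacle.} The chief difficulty is redoing the rapid-mixing analysis of the polymer dynamics entirely from the edge perspective, so that every appeal to bounded degree in~\cite{chen2019fast} is replaced by the total-degree counting bound of Step~2 — in particular controlling the polymer-resampling step and the coupling when polymers can have unboundedly many incident edges — together with establishing the linear-volume expansion of $G$ whp \emph{uniformly} over all degree sequences in $\calD_{n,d}$, which is exactly what forces the sparsity constraint $\sum_i x_i^2\le dn$.
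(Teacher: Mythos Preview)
Your plan is correct and mirrors the paper's architecture: establish total-degree expansion of $G$ via first-moment bounds in the configuration model and transfer using $\Pr[\text{simple}]=\Omega(1)$ (Corollary~\ref{cor:expander}), deduce the exponential-in-volume decay of Potts polymer weights from that expansion (equation~\eqref{eq:wtpolymer}), and analyse the edge polymer dynamics by path coupling. Two differences are worth flagging. First, your $\epsilon_0$ is not an absolute constant: the paper's expansion constant $\alpha$ genuinely depends on~$d$ (Case~1 of Lemma~\ref{lem:bigexpansion} takes $\alpha'$ of order $1/\log d$), and correspondingly $\beta_0=\Theta(d\log d\log q)$; this does not break your argument, but the phrasing should be corrected. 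Second, and more substantively, you truncate polymers at volume $c\log n$ and enumerate all of them to implement each update of the dynamics, whereas the paper keeps the full polymer model (restricted only by $|V_\gamma|<n/2$) and instead samples from the single-edge distribution $\nu_e$ in expected $O(1)$ time by first drawing a geometric random cutoff~$\bell$ and then enumerating only polymers of volume at most~$\bell$ (Algorithm~\ref{alg:pmersampler} and Lemma~\ref{lem:spsampler}). Both routes give polynomial-time algorithms, but only the paper's device achieves the near-linear running times of Remark~\ref{rem:ourrem}; your truncate-and-enumerate implementation would not match that. Incidentally, your $2^D$ bound on connected vertex sets of total degree~$D$ through a fixed vertex is sharper than the paper's $(2\emm)^{2D-1}$ of Lemma~\ref{lem:npolymers}, though either suffices.
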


\begin{remark}\label{rem:ourrem}
 Note that $\beta_0$ depends on $d$ and $q$, and our arguments later 
 (see Remark~\ref{rem:remarkdetails}) show that  $\beta_0=C d\log d\log q $ 
 for some $C>0$ (independent of~$d$ or~$q$).
If the desired accuracy $\epsilon$ is at least $ \emm^{-n}$ 
then the running time of the sampling algorithm is $O\big(n \log \tfrac{n}{\epsilon} \log\tfrac{1}{\epsilon}\big)$ and the running time of the FPRAS is 
$O\big(n^2 (\log \tfrac{n}{\epsilon})^3 \big)$.
\end{remark}

We further remark that the bounded-degree assumption has also been relaxed in \cite{flows} for the ferromagnetic Potts model on lattice graphs; the approach therein however is tailored to a certain flow representation of the Potts model, which is used as a basis for the corresponding polymer models and therefore does not extend to general spin systems. Our approach applies to general polymer models as detailed in the next section and our focus on the ferromagnetic Potts model is mainly to illustrate the method without further technical overhead; the approach for example can be adapted to general spin systems on bipartite random graphs with a given degree sequence (analogously to \cite{galanis_et_al}).

\section{Polymers}
\label{sec:conditions}

The main technique that we use to prove Theorem~\ref{thm:main} is to refine the polymer 
framework by introducing a new polymer sampling condition
which requires that the weight of each polymer
decays exponentially in its total degree.   
In order to state the new condition we first  give some relevant definitions. 

Let $G=(V_G,E_G)$ be a graph --- we refer to~$G$ as the ``host graph'' of the polymer model. Let $[q]=\{1,\hdots,q\}$ be a set of spins and $g=\{g_v\}_{v\in V_G}$ specify a set of ground-state spins for the vertices, i.e., $g_v\subseteq [q]$ for each $v\in V_G$. A polymer is a pair $\gamma = (V_\gamma, \sigma_\gamma)$ consisting of a connected set of vertices $V_\gamma$ and an assignment $\sigma_\gamma \colon V_\gamma \to [q]$ such that $\sigma_\gamma(v)\in [q]\backslash g_v$.
Let $\calP_G$ be the set of all polymers. A polymer model for the host graph $G$ is specified by a subset of allowed polymers $\calC_G \subseteq \calP_G$, and a weight function $w_G : \calC_G \rightarrow \mathbb{R}_{\geq 0}$. For polymers $\gamma, \gamma' \in \calP_G$, we write $\gamma \sim \gamma'$ to denote that  $\gamma, \gamma'$ are compatible, i.e., that for every $u\in \gamma$ and $u'\in \gamma$ the graph distance in $G$ between $u$ and $u'$ is at least 2. We define $\Omega_G = \{ \Gamma \subseteq \calC_G \mid \forall \gamma, \gamma' \in \Gamma, \gamma \sim \gamma' \}$ to be the set of all sets of mutually compatible polymers of $\calC_G$; elements of $\Omega_G$ are called polymer configurations. We define the partition function as $Z_{G} = \sum_{\Gamma \in \Omega_G} \prod_{\gamma \in \Gamma} w_{G}(\gamma)$, and the Gibbs distribution on $\Gamma\in \Omega_G$ as $\mu_{G}(\Gamma) =  {\prod_{\gamma \in \Gamma} w_{G}(\gamma)}/{Z_{G}}$. We use $(\calC_G, w_G)$ to denote the polymer model.

\begin{example}[The polymer model $(\calC^{r}_{G, q}, w_{G, \beta})$, \cite{jenssen2020algorithms}]\label{ex:Potts}
Consider the $q$-state ferromagnetic Potts model with parameter $\beta$, and let $r\in [q]$ be a colour. Let $G$ be a graph and set $g_v=\{r\}$ for every $v\in V_G$.   The weight of a polymer $\gamma =(V_\gamma,\sigma_\gamma)$ is defined as
$w_{G, \beta}(\gamma) = \emm^{- \beta B_\gamma}$,
where $B_\gamma$ denotes the number of edges from $V_\gamma$ to $V_G\backslash V_\gamma$ plus the number of edges of $G$ with both endpoints in $V_\gamma$ that are bichromatic under $\sigma_\gamma$.  We let $\calP^{r}_{G, q}$ denote the set of all polymers and the set of allowed polymers $\calC^{r}_{G, q}$ to be the set of polymers $\gamma \in \calP^{r}_{G, q}$ such that $|V_\gamma| < |V_G|/2$. Note that a polymer configuration $\Gamma$ consisting of the polymers $\gamma_1,\hdots,\gamma_k$ corresponds to a colouring $\sigma$, where a vertex $v$ takes the colour $\sigma_{\gamma_i}(v)$ if $v\in V_{\gamma_i}$ for some $i\in [k]$, and the colour $r$ otherwise; moreover, $e^{\beta |E_G|}\prod_{i\in [k]}w_G(\gamma_i)=w_G(\sigma)$.
\end{example}

Polymer models have been used to approximate the partition function of spin systems on bounded-degree host graphs.  There are several existing algorithmic frameworks which can be used to sample from these resulting polymer models. One such deterministic algorithm uses the polynomial interpolation method of Barvinok~\cite{barvinok2016combinatorics} combined with the cluster expansion to approximate the partition function of the polymer model, see~\cite{helmuth2020algorithmic} for more details. Typical running times for these deterministic algorithms are of the form $n^{O(\log(\Delta))}$, where $\Delta$ is the maximum degree of the host graph, though for polymer models these have been improved to give a running time of $n^{1+o_\Delta(1)}$, see \cite{jenssen2020algorithms}. Another approach, which we will describe in detail in Section~\ref{sec:details}, uses a Markov chain called the polymer dynamics to sample from $\mu_G$ (see also~\cite{chen2019fast} for more details). The running times of algorithms obtained using the Markov chain approach are usually faster and of the form $O(n \log n)$. Both of these approaches work for restricted ranges of parameters,\footnote{
For example, for the $q$-colour ferromagnetic Potts model
with inverse temperature $\beta$ on $\alpha$-expander graphs 
of maximum degree~$\Delta$ where $\alpha>0$, $\Delta \geq 3$, $q\geq 2$,
Theorem~3 of \cite{jenssen2020algorithms}  applies when $\beta \geq (4 + 2 \log(q \Delta))/\alpha$
whereas
Theorem 9 of \cite{chen2019fast}
 requires $\beta \geq (5 + 3 \log((q-1)\Delta)/\alpha$.}
and the essential condition required is that the weight of each polymer decays exponentially in the number of vertices it contains. To obtain our results, we give a simple generic way to 
modify this condition, as detailed below.

For a vertex $v \in V_G$ we write $\deg_G(v)$ to denote the degree of $v$ in $G$, and for a vertex subset $S \subseteq V_G$ we write $\deg_G(S)$ to denote $\sum_{v \in S} \deg_G(v)$.
\begin{definition}
\label{def:psample}
Let $q\geq 2$ be an integer, $\calG$ be a class of graphs, and
$\calF_\calG = \{ (\calC_G, w_G) \mid G \in \calG \}$ be a family of $q$-spin polymer models. We say that $\calF_\calG$ satisfies the \textit{polymer sampling condition} with constant $\tau \geq 3 \log(8 \emm^3 (q-1))$ if 
$
w_G(\gamma) \leq \emm^{- \tau \deg_G(V_\gamma)}
$
for all $G \in \calG$ and all $\gamma \in \calC_G$.\footnote{Unless we specify otherwise, the base of all logarithms in this paper is assumed to be~$\emm$.}
\end{definition}

Using Definition~\ref{def:psample},
we will  show (Lemma~\ref{lem:pmersampling}, below)
that 
if a ``computationally feasible'' family of polymer models on a class of graphs $\calG$ satisfies  this new condition, then there is an efficient algorithm which, given a graph $G \in \calG$, approximately samples from the Gibbs distribution of the polymer model corresponding to $G$.

The new polymer sampling condition in Definition~\ref{def:psample} is analogous to the original one
in~\cite{chen2019fast}
except that the original condition requires the weight of a polymer to decay exponentially in its size, and in particular that the constant $\tau$ is sufficiently big relative to the maximum degree of $G$. 
The new condition relaxes this, allowing us to choose the constant~$\tau$ in Definition~\ref{def:psample} so that it
 does not depend  on the maximum degree of the host graph, which is how we overcome the limitations of previous work. Technically,
the improvement comes from the fact that
previous work relies on bounding   the number of connected vertex subsets of a given size 
(with bounds that depend on the maximum degree of the graph),  but here we are able to 
instead rely on 
the following lemma which bounds the number of connected vertex subsets with a given total degree
and this enables us to avoid  restricting
the maximum degree of the graph. 
The new condition, which replaces the notion of ``size'' with total degree,
fits well with the original abstract polymer model framework of~\cite{kotecky1986cluster}, where the notion of 
the ``size'' of a polymer is an abstract function.

\begin{restatable}{lemma}{npolymers}\label{lem:npolymers}
Let $G = (V_G, E_G)$ be a graph, $v \in V_G$, and $\ell \geq 1$ be an integer. The number of connected vertex subsets $S \subseteq V_G$ such that $v \in S$ and $\deg_G(S) = \ell$ is at most $(2\emm)^{2\ell - 1}$.
\end{restatable}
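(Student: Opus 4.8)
The plan is to encode each admissible $S$ injectively by a short binary string and then count strings. Fix an arbitrary total order on $V_G$, which in particular orders every adjacency list. Given a connected set $S$ with $v\in S$ and $\deg_G(S)=\ell$, I will run a breadth-first search ``guided'' by $S$: keep a queue, initialised to $(v)$, and a set $D$ of discovered vertices, initialised to $\{v\}$; repeatedly dequeue a vertex $u$ and, for each neighbour $w$ of $u$ taken in the fixed order with $w\notin D$, append one bit to a string --- the bit is $1$ if $w\in S$ and $0$ otherwise --- and whenever the bit is $1$ add $w$ to $D$ and to the back of the queue. Stop when the queue is empty and output the string.

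First I would verify that decoding works: running the very same procedure, but reading the required bits off the given string instead of consulting $S$, reconstructs $S$, so the map $S\mapsto(\text{string})$ is injective. (Since the encoding of $S$ halts exactly when it has emitted its last bit, no output string is a proper prefix of another, so the strings even form a prefix code.) The crux is the length bound. Because a vertex enters the queue only when a bit has certified that it lies in $S$, every dequeued vertex lies in $S$, and each vertex of $S$ is dequeued exactly once; and when $u$ is dequeued, at most $\deg_G(u)$ bits are emitted (one per neighbour). Summing over $u\in S$, the string has length at most $\sum_{u\in S}\deg_G(u)=\deg_G(S)=\ell$.

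Hence the number of admissible $S$ is at most the number of binary strings of length at most $\ell$, i.e.\ $\sum_{j=0}^{\ell}2^{j}<2^{\ell+1}$. Finally $2^{\ell+1}\le(2\emm)^{2\ell-1}$ for every integer $\ell\ge 1$, since $2\ell-1\ge\ell$ gives $(2\emm)^{2\ell-1}\ge(2\emm)^{\ell}=2^{\ell}\emm^{\ell}\ge 2^{\ell}\emm\ge 2^{\ell+1}$; this already beats the claimed bound.

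The one place that needs care --- and what I would flag as the main obstacle --- is the charging argument for the string length: one must emit a bit only for neighbours not yet in $D$ (otherwise a vertex could be charged once per incident edge, which $\deg_G(S)$ does not control), and one must observe that the guided search never leaves $S$, so that the degree sum is over $S$ and equals exactly $\ell$. A superficially similar alternative --- encoding a rooted plane spanning tree of $G[S]$ by its depth-first contour walk, charging $\deg_G(u)$ at each step that discovers a child of $u$ --- does not suffice, because a single vertex of very large degree makes the resulting product $\prod_{u\in S}\deg_G(u)^{\#\mathrm{children}(u)}$ exceed $(2\emm)^{2\ell-1}$; the edge-by-edge binary reveal is the right level of granularity.
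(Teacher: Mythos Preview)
Your argument is correct and is in fact sharper and more elementary than the paper's. The paper first passes from connected sets to spanning subtrees and then constructs an injection into the set of rooted subtrees of size at most $2\ell$ inside the infinite $3$-regular tree, finishing with Bollob{\'a}s's bound $|T^*(2\ell,3)|\le(2\emm)^{2\ell-1}$; the injection replaces each vertex $u$ by a path of length $\deg_G(u)$ with children hanging off the appropriate positions. Your BFS-guided bit string bypasses both the spanning-tree step and the external tree-counting lemma, encoding $S$ directly by a binary word of length at most $\deg_G(S)=\ell$ and yielding the stronger bound $2^{\ell+1}$. The key observation that makes your charging work---emitting a bit only for neighbours not yet in $D$, so each dequeued $u\in S$ contributes at most $\deg_G(u)$ bits and the search never leaves $S$---is exactly the point you flag, and it is sound. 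What the paper's route buys is a link to the classical subtree-counting framework (which also underlies the polymer literature); what your route buys is self-containment and a better base in the exponential, which would feed through to slightly weaker numerical requirements on $\tau$ in Definitions~\ref{def:psample} and~\ref{def:pmix} if one cared to optimise them.
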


In addition to the bound on the number of connected vertex subsets in
Lemma~\ref{lem:npolymers}, we will use the fact that these connected vertex subsets can be enumerated in time exponential in the total degree~$\ell$ (see Lemma~\ref{lem:enumsgraphs}). Although the bound in Lemma~\ref{lem:npolymers} is exponential in~$\ell$, this will be mitigated by the fact that the new polymer sampling condition ensures that
the weight of each polymer is  exponentially small in its total degree. The new polymer sampling condition therefore allows us to prove that the following condition holds --- this condition is  analogous to the polymer mixing condition of~\cite{chen2019fast}, except that we consider edges instead of vertices.  For a polymer $\gamma \in \calP_G$, let $E_\gamma$ denote the set of edges of $G$ with at least one endpoint in $V_\gamma$.
\begin{definition}
\label{def:pmix}
Let $q\geq 2$ be an  integer, $\calG$ be a class of graphs, and $\calF_\calG = \{ (\calC_G, w_G) \mid G \in \calG \}$ be a family of $q$-spin polymer models. We say that $\calF_\calG$ satisfies the \textit{polymer mixing condition} with constant $\theta \in (0, 1)$ if
$
\sum_{\gamma' \nsim \gamma} |E_{\gamma'}| \cdot w_{G}(\gamma') \leq \theta |E_\gamma|
$
for all $G \in \calG$ and all $\gamma \in \calC_G$.
\end{definition}

In contrast to the conditions in~\cite{chen2019fast}, the two new conditions consider edges since we modify the polymer dynamics algorithm to sample edges instead of vertices. Subject to these new conditions, the techniques of~\cite{chen2019fast} can be adapted to show that the modified polymer dynamics mixes rapidly, therefore giving the efficient algorithm for sampling from the Gibbs distribution of a polymer model. We give the relevant details in Section~\ref{sec:details}.

Finally, in order to use the modified polymer dynamics as an efficient algorithm for computing an approximate sample from $\mu_G$, 
we will need a mild computational condition for polymers. More precisely, we say that a family of polymer models $\{ (\calC_G, w_G) \mid G \in \calG \}$ is \textit{computationally feasible} if for all $G \in \calG$ and all $\gamma \in \calP_G$,
it is possible to
decide whether $\gamma \in \calC_G$ and to compute $w_G(\gamma)$, if it is, in $O(\emm^{\deg_G(V_\gamma)})$ time. Computational feasibility serves exactly the same purpose 
as it did in Definition 3 of~\cite{chen2019fast}, which requires that the same operations are able to be carried out in time depending on $|V_\gamma|$ (instead of $\deg_G(V_\gamma)$ that we use here).

In Section~\ref{sec:details}, we 
prove the following lemma which gives an efficient algorithm for sampling\footnote{Given an accuracy parameter $\epsilon > 0$, we say that  a random variable $X$ is an $\epsilon$-sample from the probability distribution $\mu$ if the total variation distance between the distribution of $X$ and $\mu$ is at most $\epsilon$.} from the Gibbs distribution of a polymer model and for approximating its partition function.
In order to prove the lemma, we extend  the polymer dynamics algorithm of~\cite{chen2019fast} to the unbounded degree setting. The proof of the lemma uses the fact (Lemma~\ref{lem:samplingtomixing}) that the polymer sampling condition implies the polymer mixing condition. 

\begin{restatable}{lemma}{pmersampling}\label{lem:pmersampling}
Let $q\geq 2$ be an integer, $\calG$ be a class of graphs, and  $\calF_\calG$ be a family of computationally feasible $q$-spin polymer models satisfying the polymer sampling condition. 

There are randomised algorithms which, given as input a graph $G \in \calG$ with $m$ edges and an accuracy parameter $\epsilon > 0$, output an $\epsilon$-sample from $\mu_G$ in $O\big(m \log \tfrac{m}{\epsilon}\log\tfrac{1}{\epsilon}\big)$ time, and  an $\epsilon$-approximation to $Z_G$, with probability at least $3/4$, in $O\big(m^2 \log (\tfrac{m}{\epsilon})^3\big)$ time.
\end{restatable}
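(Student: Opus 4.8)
The plan is to adapt the polymer dynamics of~\cite{chen2019fast} to an \emph{edge-based} update rule, show that it mixes rapidly using the polymer mixing condition (which holds by Lemma~\ref{lem:samplingtomixing}, since we assume the polymer sampling condition), and then obtain the approximate-counting algorithm from the sampler by the standard telescoping-product reduction. I would define the \emph{edge polymer dynamics}, a Markov chain $(X_t)_{t\ge 0}$ on $\Omega_G$: from a configuration $\Gamma$, pick an edge $e\in E_G$ uniformly at random, let $\gamma$ be the polymer of $\Gamma$ with $e\in E_\gamma$ if one exists (there is at most one, since by compatibility any two polymers of a configuration have disjoint edge-neighbourhoods), set $\Delta=\Gamma\setminus\{\gamma\}$, let $A$ be the set of polymers $\gamma'\in\calC_G$ with $e\in E_{\gamma'}$ such that $\Delta\cup\{\gamma'\}$ is a valid configuration, and move to $\Delta\cup\{\gamma'\}$ with probability proportional to $w_G(\gamma')$ for each $\gamma'\in A$, or to $\Delta$ with probability proportional to~$1$ (normalising by $1+\sum_{\gamma'\in A}w_G(\gamma')$). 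A routine detailed-balance check shows this chain is reversible with respect to $\mu_G$, and it is clearly irreducible (every configuration can reach $\emptyset$) and aperiodic. The same compatibility observation gives $\sum_{\gamma\in\Gamma}|E_\gamma|\le |E_G|=m$ for every $\Gamma\in\Omega_G$, so, with respect to the metric $d(\Gamma,\Gamma')=\sum_{\gamma\in\Gamma\triangle\Gamma'}|E_\gamma|$, the diameter of $\Omega_G$ is at most $2m$.

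Next I would bound the mixing time. By Lemma~\ref{lem:samplingtomixing} the polymer sampling condition implies the polymer mixing condition with some $\theta\in(0,1)$. I would then run a path-coupling argument with respect to $d$: for configurations $\Gamma_1=\Delta\cup\{\gamma_0\}$ and $\Gamma_2=\Delta$ differing in a single polymer $\gamma_0$, couple the two chains to pick the same edge $e$ each step. If $e\in E_{\gamma_0}$ the two copies can be coupled to agree, decreasing the distance by roughly $|E_{\gamma_0}|$; the only way the distance can increase is if the presence of $\gamma_0$ in $\Gamma_1$ forbids a polymer $\gamma'\nsim\gamma_0$ that $\Gamma_2$ would add, and the total contribution of such events is at most $\sum_{\gamma'\nsim\gamma_0}|E_{\gamma'}|\,w_G(\gamma')\le\theta|E_{\gamma_0}|$ by the mixing condition. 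Balancing the gain against the loss yields expected contraction by a factor $1-\Omega(1/m)$, which with the diameter bound gives $\tmix(\epsilon)=O\big(m\log(m/\epsilon)\big)$. To implement a single step I would sample the proposed polymer by rejection sampling over connected vertex sets of increasing total degree through an endpoint of~$e$: Lemma~\ref{lem:enumsgraphs} enumerates the candidates of total degree $\ell$ in time exponential in $\ell$, Lemma~\ref{lem:npolymers} bounds their number by $(2\emm)^{2\ell-1}$ (times at most $(q-1)^{\ell/3}$ colourings, since $G$ has minimum degree~$3$), and the polymer sampling condition with $\tau\ge 3\log(8\emm^3(q-1))$ makes $\sum_{\gamma'}w_G(\gamma')$ a rapidly convergent series, so the candidate weights (computable by computational feasibility) can be summed and sampled in expected $O(1)$ time; truncating the search at a suitable total degree bounds the accumulated error by $o(\epsilon)$ over all steps and keeps the per-step cost small. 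This gives the claimed $O\big(m\log\tfrac m\epsilon\log\tfrac1\epsilon\big)$ sampling algorithm.

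For the FPRAS I would use self-reducibility along the edges: fix an ordering $e_1,\dots,e_m$ of $E_G$, let $\calC_G^{(i)}$ consist of the polymers of $\calC_G$ whose edge-set lies in $\{e_1,\dots,e_i\}$, and let $Z_G^{(i)}$, $\mu_G^{(i)}$ be the corresponding partition function and Gibbs distribution, so $Z_G^{(0)}=1$ and $Z_G^{(m)}=Z_G$. Each restricted model is again computationally feasible and satisfies the polymer sampling condition, hence is samplable by the dynamics, and $Z_G^{(i-1)}/Z_G^{(i)}=\mu_G^{(i)}\big(\{\Gamma:\text{no polymer of }\Gamma\text{ uses }e_i\}\big)$, a quantity bounded below by a positive constant by the sampling condition and therefore estimable from samples with controlled relative error. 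Combining the $m$ ratio estimators and choosing the per-phase accuracy and the number of samples in the standard way yields an $\epsilon$-approximation to $Z_G$ with probability at least $3/4$ in total time $O\big(m^2(\log\tfrac m\epsilon)^3\big)$.

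The step I expect to be the main obstacle is the path-coupling contraction estimate in the edge metric: one must track precisely how a proposed update at each chosen edge changes the edge-weighted Hamming distance for a single-polymer discrepancy, verifying that the ``blocking'' contributions are exactly the ones controlled by the polymer mixing condition, while also handling the heat-bath normalisations correctly. The efficient implementation of a single update in the unbounded-degree regime — which rests essentially on Lemma~\ref{lem:npolymers} to keep the rejection sampler fast regardless of the maximum degree — is the other place that needs care.
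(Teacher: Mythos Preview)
Your overall strategy---edge-based polymer dynamics, path coupling via the polymer mixing condition (obtained from the sampling condition through Lemma~\ref{lem:samplingtomixing}), rejection sampling for the single-polymer update using Lemmas~\ref{lem:npolymers} and~\ref{lem:enumsgraphs}, then a telescoping reduction for counting---is exactly the paper's route. There are, however, two places where your version diverges and the divergence creates a real gap.

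\textbf{Heat-bath versus state-independent proposals.} Your dynamics normalises over the set $A$ of polymers compatible with the \emph{current} residual configuration $\Delta$, so the proposal distribution depends on the state. The paper instead uses a lazy Metropolis-type chain: with probability $1/2$ remove the polymer at $e$; with probability $1/2$ draw $\gamma'$ from the \emph{fixed} distribution $\nu_e(\gamma')=w_G(\gamma')$ (independent of the state) and add it only if compatible. This buys two things. First, the identity coupling is valid: both copies draw the same $\gamma'$ from $\nu_e$, so the only source of divergence is that one copy accepts and the other rejects, which happens precisely when $\gamma'\nsim\gamma_0$. That is literally the polymer mixing condition, and the contraction bound falls out immediately (with the unweighted Hamming metric, not your edge-weighted one). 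For your heat-bath chain, the two copies have different candidate sets $A_1\subsetneq A_2$ and different normalisers $Z_1<Z_2$; even under a maximal coupling, when the copies disagree the first copy is redistributed over $A_1\cup\{\emptyset\}$ and may add a polymer with large $|E_{\gamma'}|$, so bounding the \emph{expected} increase in your edge-weighted distance takes genuine extra work that your sketch does not supply. Second, your rejection-sampling implementation (enumerate by total degree, accept with probability $w_G(\gamma')\emm^{r\deg_G(V_{\gamma'})}$) samples from a state-independent law with marginal $w_G(\gamma')$---that is, it implements the paper's $\nu_e$, not your heat-bath proposal. So your chain description and your implementation are inconsistent; switching to the paper's Metropolis update resolves both issues at once.

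\textbf{A minor slip.} You bound the number of spin assignments on a polymer of total degree $\ell$ by $(q-1)^{\ell/3}$ ``since $G$ has minimum degree~$3$''. Lemma~\ref{lem:pmersampling} is stated for an arbitrary class $\calG$, with no minimum-degree hypothesis. The correct bound is $(q-1)^{|V_\gamma|}\le (q-1)^{\ell}$ (any connected set with $\deg_G(S)=\ell\ge 1$ has $|S|\le\ell$), which is exactly what the constant in the sampling condition $\tau\ge 3\log(8\emm^3(q-1))$ is calibrated to absorb. Your telescoping-product counting argument is fine in spirit; the paper simply defers this step to~\cite[Section~3]{chen2019fast}.
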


\section{Application to unbounded-degree graphs}

Let $\alpha > 0$ be a real number. We say that a graph $G$ is an $\alpha$-total-degree expander if, for all connected
vertex subsets $S \subseteq V_G$ with $|S| \leq |V_G|/2$, we have $e_G(S, S^c) \geq \alpha \deg_G(S)$, where  $e_G(S, S^c)$ denotes the number of edges with one endpoint in $S$ and the other in $S^c:=V_G \setminus S$.  Let $\calG_{\alpha}$ denote the set of all $\alpha$-total-degree expanders. Note, every connected $G\in \calG_\alpha$   is also an $\alpha$-expander (i.e., $e_G(S, S^c) \geq \alpha |S|$).

When $\beta$ is sufficiently large,
the polymer model from Example~\ref{ex:Potts}  satisfies the polymer sampling condition (Definition~\ref{def:psample}) with constant $\tau = \alpha \beta$. To see this, consider $\gamma \in \calC^{r}_{G, q}$ and observe that since $B_\gamma \geq e_G(V_\gamma, V_\gamma^c)$ and $|V_\gamma| < |V_G|/2$, it follows that
\begin{equation}
\label{eq:wtpolymer}
w_{G, \beta}(\gamma) \leq \exp \left\{ - \alpha \beta \deg_G(V_\gamma) \right\} = \emm^{- \tau \deg_G(V_\gamma)},
\end{equation}
where $\tau \geq 3 \log(8 \emm^3 (q-1))$ if $\beta \geq \frac{3}{\alpha} \log(8 \emm^3 (q-1))$.

We may therefore apply Lemma~\ref{lem:pmersampling} in order to efficiently sample from the ferromagnetic Potts model 
and to estimate $Z_G$ for
 $G\in \calG_{\alpha}$, provided that $\beta$ is sufficiently large. The proof of the following theorem is deferred until Section~\ref{sec:pottsdetails}.

\begin{restatable}{theorem}{expandersampling}\label{thm:expandersampling}
\sloppy Let $\alpha > 0$ be a real number. Let $q \geq 3$ be an integer and $\beta \geq \frac{3}{\alpha} \log(8 \emm^3 (q-1))$ be a real. 
For the Potts model on $G\in \calG_{\alpha}$, there is a poly-time approximate sampling algorithm for $\mu_{G,q,\beta}$ and an FPRAS for 
$Z_{G,q,\beta}$. 

In fact, for $n=|V_G|$ and $m=|E_G|$, if the desired accuracy $\epsilon$ satisfies  $\epsilon \geq \emm^{-n}$ then the running time of the sampler is  $O(m \log \tfrac{m}{\epsilon} \log \tfrac{1}{\epsilon})$ and the running time of the FPRAS is  $O\big(m^2 (\log \frac{m}{\epsilon})^3\big)$.
\end{restatable}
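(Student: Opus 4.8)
The plan is to reduce everything to Lemma~\ref{lem:pmersampling}, applied to the $q$ polymer models $(\calC^r_{G,q},w_{G,\beta})$, $r\in[q]$, from Example~\ref{ex:Potts}, and then to convert polymer-model samples and estimates into Potts samples and estimates. First I would check that the hypotheses of Lemma~\ref{lem:pmersampling} are met. Inequality~\eqref{eq:wtpolymer} already shows that $\{(\calC^r_{G,q},w_{G,\beta}):G\in\calG_\alpha\}$ satisfies the polymer sampling condition with $\tau=\alpha\beta\ge 3\log(8\emm^3(q-1))$, under precisely the stated hypothesis on $\beta$. Computational feasibility is routine: given a polymer $\gamma=(V_\gamma,\sigma_\gamma)$, one decides whether $\gamma\in\calC^r_{G,q}$ by testing connectivity of $V_\gamma$ in $G$ and the inequality $|V_\gamma|<|V_G|/2$, and computes $B_\gamma$ and hence $w_{G,\beta}(\gamma)=\emm^{-\beta B_\gamma}$ to sufficient precision, all by a single scan of the edges incident to $V_\gamma$, in time polynomial in $\deg_G(V_\gamma)$ and therefore in $O(\emm^{\deg_G(V_\gamma)})$. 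Lemma~\ref{lem:pmersampling} then provides, for each $r$, an $\epsilon'$-sampler for $\mu^r_G$ in $O(m\log\tfrac m{\epsilon'}\log\tfrac1{\epsilon'})$ time and an $\epsilon'$-approximation $\hat Z^r$ of $Z^r_G$, with probability $\ge 3/4$, in $O(m^2(\log\tfrac m{\epsilon'})^3)$ time.

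The core of the proof is the comparison between the polymer models and the Potts model, and this is the step I expect to require the most care. By the correspondence in Example~\ref{ex:Potts}, $\emm^{\beta|E_G|}Z^r_G=\sum_{\sigma\in\Omega^r}w(\sigma)$, where $\Omega^r$ is the set of $q$-colourings $\sigma$ whose ``defect components'' — the connected components of $\{v:\sigma(v)\ne r\}$ — all have fewer than $|V_G|/2$ vertices; a colour permutation shows $Z^r_G$ is the same for every $r$. Writing $N(\sigma)=|\{r:\sigma\in\Omega^r\}|$, we obtain the exact identity $\sum_{r}\emm^{\beta|E_G|}Z^r_G-Z_{G,q,\beta}=\sum_\sigma(N(\sigma)-1)w(\sigma)$, and the task is to bound the right-hand side. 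The key structural observation is that \emph{if $N(\sigma)\ne 1$ then every connected component of every colour class of $\sigma$ has at most $|V_G|/2$ vertices}: if some colour class were a connected set of more than $|V_G|/2$ vertices then one checks directly that $N(\sigma)=1$; and if $N(\sigma)\ge 2$, so $\sigma\in\Omega^r\cap\Omega^{r'}$ for distinct $r,r'$, then for each colour $c$ at least one of $r,r'$ differs from $c$ and $\sigma\in\Omega^r\cap\Omega^{r'}$ forces the components of $\sigma^{-1}(c)$ to be small, while if $N(\sigma)=0$ then no colour class has more than $|V_G|/2$ vertices at all. Summing the $\alpha$-total-degree-expander inequality $e_G(A,A^c)\ge\alpha\deg_G(A)$ over all such components $A$ then gives $m_G(\sigma)=\sum_c|E(G[\sigma^{-1}(c)])|\le(1-\alpha)|E_G|$, so $w(\sigma)\le\emm^{(1-\alpha)\beta|E_G|}$ whenever $N(\sigma)\ne 1$.

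Since there are at most $q^{|V_G|}$ colourings, $Z_{G,q,\beta}\ge q\,\emm^{\beta|E_G|}$ (the $q$ constant colourings), and $|E_G|\ge|V_G|/2$ (we may assume $G$ connected, stripping off isolated vertices at the cost of an explicit factor of $q$), it follows that $q\,\emm^{\beta|E_G|}Z^r_G=\emm^{\pm\delta}Z_{G,q,\beta}$ with $\delta\le q^{|V_G|}\emm^{-\alpha\beta|E_G|}$, and the hypothesis on $\beta$ makes $\delta$ exponentially small in $|V_G|$; in particular $\delta\le\tfrac1{10}\emm^{-|V_G|}$ once $|V_G|$ exceeds a constant depending on $q$ and $\alpha$ (smaller $G$ being handled by brute force). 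The point where genuine effort is needed is hidden here: the gain that makes this work for the stated, only logarithmically-$q$-dependent lower bound on $\beta$ is that forcing \emph{all} colour classes to be ``shattered'' produces a bound decaying in $|E_G|$ rather than merely in $|V_G|$, which comfortably dominates the crude $q^{|V_G|}$ count.

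The algorithms now follow. For the FPRAS, run the estimator above once, say for $r=1$, with $\epsilon'=\epsilon/2$, and output $q\,\emm^{\beta|E_G|}\hat Z^1$; combining the $\emm^{\pm\epsilon'}$ error of $\hat Z^1$ with the factor $\emm^{\pm\delta}$ and using $\delta\le\epsilon/2$ for $\epsilon\ge\emm^{-|V_G|}$ makes this an $\epsilon$-approximation of $Z_{G,q,\beta}$, computed in $O(m^2(\log\tfrac m\epsilon)^3)$ time. For the sampler, draw $r\in[q]$ uniformly at random — which is legitimate \emph{because} all $Z^r_G$ coincide, so no weight estimation is required — then draw an $(\epsilon/2)$-sample $\Gamma$ from $\mu^r_G$ and output the colouring of $V_G$ determined by $\Gamma$ via Example~\ref{ex:Potts}; the output law $\tilde\nu$ satisfies $\|\tilde\nu-\mu_{G,q,\beta}\|_{\mathrm{TV}}\le\tfrac\epsilon2+\|\nu-\mu_{G,q,\beta}\|_{\mathrm{TV}}$ where $\nu(\sigma)=N(\sigma)w(\sigma)\big/\sum_r\emm^{\beta|E_G|}Z^r_G$, and the same estimates bound the last term by $O(\delta)\le\epsilon/2$, giving an $\epsilon$-sample in $O(m\log\tfrac m\epsilon\log\tfrac1\epsilon)$ time. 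Finally, when $\epsilon<\emm^{-|V_G|}$ one has $1/\epsilon>\emm^{|V_G|}$, so brute-force enumeration of all $q^{|V_G|}$ colourings solves both problems exactly in $\mathrm{poly}(|V_G|,1/\epsilon)$ time; combined with the above this yields the theorem.
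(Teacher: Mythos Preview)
Your proof is correct and follows the same overall architecture as the paper: verify the polymer sampling condition via~\eqref{eq:wtpolymer}, verify computational feasibility, invoke Lemma~\ref{lem:pmersampling}, and then transfer the polymer-model sampler/estimator to the Potts model. The running-time bookkeeping and the brute-force fallback for $\epsilon<\emm^{-n}$ are also handled the same way.

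The one genuine difference is in the comparison step between the polymer models and the Potts model. The paper simply quotes Lemma~\ref{lem:gstates} (Lemma~12 of Jenssen--Keevash--Perkins) to assert that $q\cdot Z^r_{G,q,\beta}$ is an $\emm^{-n}$-approximation of $Z_{G,q,\beta}$, where $Z^r_{G,q,\beta}$ sums over colourings with more than half the vertices coloured~$r$; it then identifies this with the polymer partition function. You instead give a direct, self-contained argument: you work with the set $\Omega^r$ of colourings whose defect components (with respect to~$r$) are all small --- which is \emph{exactly} the set in bijection with polymer configurations --- introduce the multiplicity $N(\sigma)=|\{r:\sigma\in\Omega^r\}|$, and show that colourings with $N(\sigma)\ne 1$ have all colour-class components of size at most $n/2$, whence the total-degree-expander inequality summed over components forces $m_G(\sigma)\le(1-\alpha)|E_G|$. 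This yields the bound $|q\,\emm^{\beta|E_G|}Z^r_G-Z_{G,q,\beta}|\le q^{n+1}\emm^{(1-\alpha)\beta|E_G|}$, which is exponentially small relative to $Z_{G,q,\beta}\ge q\,\emm^{\beta|E_G|}$ under the stated hypothesis on~$\beta$. Your route is slightly more work but is self-contained, uses the total-degree-expander property directly rather than the vertex-expander corollary, and sidesteps the mild imprecision in the paper's ``bijection'' between $\hat\Omega^r_{G,q}$ and $\Omega^r_{G,q}$ (the former is strictly larger).
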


\subsection{Expansion of random graphs with specified degree sequences}
\label{sec:mainproof}
Let $d$ be a real number. In this section, we will show that a graph $G\sim\calG(n, \vec{x})$ for a degree sequence $\vec{x} \in \calD_{n, d}$ is whp an $\alpha$-total-degree-expander for some constant $\alpha>0$, i.e.,  that $G \in \calG_\alpha$.

To work with $G\sim \calG(n, \vec{x})$, we consider the standard configuration model, where a random multigraph  $H = (V_H, E_H)$ with the given degree sequence $\vec x$ is sampled by the following process.  For each $i \in [n]$, we attach $x_i$ half-edges to the vertex $i$. We then sample a uniformly random perfect matching on the half-edges to give $E_H$. This uniformly random perfect matching can be sampled by performing the following until no half-edges remain: choose any remaining half-edge, choose another remaining half-edge uniformly at random, then pair these two half-edges and remove them from the set of remaining half-edges. We write  $H \sim \cm(n, \vec{x})$. Note, for two vertices $i, j \in V_H$ such that $i \neq j$,  the probability that a half edge attached to $i$ and a half edge attached to $j$ are paired is
\begin{equation}\label{def:pair}
p_{\{i,j\}} = \frac{x_i x_j}{2m - 1},  \mbox{where $m = \tfrac12\sum_{k=1}^n x_k$,}
\end{equation}
and similarly the probability that two half-edges of $i$ are connected is $p_{\{i,i\}} = \tfrac{x_i(x_i - 1)}{2m(2m-1)}$.

We first prove results about $\cm(n, \vec{x})$, since asymptotic properties of $\cm(n, \vec{x})$ can easily be transferred back to $\calG(n, \vec{x})$ using the following  straightforward consequence of~\cite[Theorem 1.1]{janson2014probability}.

\begin{lemma}
\label{lem:cmtog}
Let $d$ be a positive real number. For every positive integer~$n$, let $\calE_n$ be a set of $n$-vertex multigraphs. If, for some   $\vec{x} \in \calD_{n, d}$, we have $G \sim \calG(n, \vec{x})$ and $H \sim \cm(n, \vec{x})$ then the following is true. If $H \in \calE_n$ with high probability, then $G \in \calE_n$ with high probability.
\begin{proof}
Suppose that $H'$ is drawn from $\cm(n, \vec{x})$ conditioned on being simple. It is well-known (for example, see~\cite[Proposition 7.15]{van2016random}) that $H'$ is a uniformly-random graph with degree sequence $\vec{x}$. Thus, it follows that
\begin{equation}
\label{eq:condprob}
\Pr(G \notin \calE_n) = \Pr(H' \notin \calE_n) = \frac{\Pr(H \notin \calE_n, H \text{ is simple})}{\Pr(H \text{ is simple})} \leq \frac{\Pr(H \notin \calE_n)}{\Pr(H \text{ is simple})}.
\end{equation}
By assumption, we have that $\Pr(H \notin \calE_n) = o(1)$. Applying Theorem 1.1 of~\cite{janson2014probability}, whose conditions are satisfied by Item~\ref{item:ii} of Definition~\ref{def:degseq}, it follows that there is a positive~$p$ such that when $n$ is sufficiently large $\Pr(H \text{ is simple}) > p$. By~\eqref{eq:condprob}, we therefore have that $\Pr(G \notin \calE_n) = o(1)$, and the result follows.
\end{proof}
\end{lemma}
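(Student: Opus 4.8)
The plan is to exploit the well-known fact that the configuration model $\cm(n, \vec{x})$, conditioned on producing a simple graph, is distributed exactly as $\calG(n, \vec{x})$ (see e.g.\ \cite[Proposition 7.15]{van2016random}). Writing $H'$ for $H \sim \cm(n, \vec{x})$ conditioned on the event that $H$ is simple, we then have that $H'$ and $G$ have the same law, so $\Pr(G \notin \calE_n) = \Pr(H' \notin \calE_n)$. Expanding the conditional probability and bounding the numerator by dropping the ``simple'' constraint gives
\[
\Pr(G \notin \calE_n) \;=\; \frac{\Pr(H \notin \calE_n,\ H \text{ simple})}{\Pr(H \text{ simple})} \;\leq\; \frac{\Pr(H \notin \calE_n)}{\Pr(H \text{ simple})}.
\]
Since by hypothesis $\Pr(H \notin \calE_n) = o(1)$ uniformly over $\vec{x} \in \calD_{n, d}$, it suffices to show that $\Pr(H \text{ simple})$ is bounded below by a positive constant that is independent of $\vec{x} \in \calD_{n, d}$ for all sufficiently large $n$.

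For this last point I would appeal to Theorem~1.1 of~\cite{janson2014probability}, which shows that for degree sequences satisfying $\sum_i x_i^2 = O(n)$ the probability that $\cm(n, \vec{x})$ is simple stays bounded away from $0$ (in fact it converges to a positive limit governed by the ratio $\sum_i x_i(x_i-1)\big/\sum_i x_i$). The hypothesis $\sum_i x_i^2 \leq dn$ from Item~\ref{item:ii} of Definition~\ref{def:degseq} is exactly what is needed, and --- crucially for the uniform-over-$\vec{x}$ reading of ``with high probability'' --- it lets the bound depend only on $d$, so the resulting constant $p > 0$ does not depend on the particular sequence. Substituting $p$ into the displayed inequality yields $\Pr(G \notin \calE_n) = o(1)$, i.e.\ $G \in \calE_n$ whp.

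The only genuinely non-trivial ingredient here is the uniform positive lower bound on $\Pr(H \text{ simple})$, which is precisely where the sparsity assumption $\sum_i x_i^2 \leq dn$ enters; everything else is the routine conditioning identity. The things I would be careful to verify are that the hypotheses of the cited theorem are met (essentially just $\sum_i x_i^2 = O(n)$, together with $\sum_i x_i$ even so that the configuration model is well-defined, else the statement is vacuous) and that the constant it supplies is genuinely uniform over all of $\calD_{n,d}$ rather than sequence-dependent; no other subtleties arise.
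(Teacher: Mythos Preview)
Your proposal is correct and follows essentially the same approach as the paper: both use the conditioning identity $\Pr(G \notin \calE_n) = \Pr(H \notin \calE_n \mid H\text{ simple}) \leq \Pr(H \notin \calE_n)/\Pr(H\text{ simple})$ and then invoke Theorem~1.1 of~\cite{janson2014probability} (via Item~\ref{item:ii} of Definition~\ref{def:degseq}) to lower-bound the denominator by a positive constant. Your added remark about the lower bound being uniform over $\vec{x} \in \calD_{n,d}$ is a worthwhile clarification given the paper's definition of ``whp''.
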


For a (multi)graph $H = (V_H, E_H)$ we define the tree-excess to be $t_H = |E_H| - (|V_H| -1)$; that is, the number of edges more than a tree that $H$ has. First, we show that multigraphs drawn from the configuration model have locally bounded tree excess.
\begin{lemma}
\label{lem:excess}
Let $d$ be a positive real number. The following is true with high probability when $H = (V_H, E_H)$ is drawn from $\cm(n, \vec{x})$ uniformly over all degree sequences $\vec{x}   \in \calD_{n, d}$. For all connected vertex sets $S \subseteq V_H$ with $|S| \leq (\log n)^2$ and $\deg_H(S) \geq 36$, we have that $t_{H[S]} \leq \frac{1}{6} \deg_H(S)$.
\begin{proof}
For positive integers $k$ and $\ell$, and a non-negative integer $t$, let the random variable $X_{k, \ell, t}$ denote the number of connected vertex subsets $S \subseteq V_H$ such that $|S|=k$, $\deg_H(S) = \ell$, and $t_{H[S]}=t$. To prove the lemma, we will show that whp
\[
\sum_{k \leq \floor{(\log n)^2}} \sum_{\ell \geq 36} \sum_{  t \geq \lfloor \ell/6 \rfloor + 1  } X_{k, \ell, t} = 0.
\]
In fact, we can further restrict the range of summation. From the lower bound in Item~\ref{item:i} of Definition~\ref{def:degseq}, we have that $x_i \geq 3$ for all $i$, and therefore $\ell \geq 3k$. Item~\ref{item:ii} shows that $\sum_{i} x_i \leq d n$, and therefore $\ell \leq d n$ and $t \leq \ell/2 \leq d n/2$. So, consider any integer $\ell$ in the range $36 \leq \ell \leq d n$, any integer $k$ in the range $1 \leq k \leq \min \{ (\log n)^2, \ell/3 \}$, and any integer $t > \ell/6$. There are at most $\binom{n}{k}$ vertex subsets $S \subseteq V_H$ with $|S| = k$ and $\deg_G(S) = \ell$. 
Let $j=k-1+t$ be the number of edges with both endpoints in~$S$.
Given such a set~$S$, there are at most $\binom{\ell}{2j}$ possibilities for the set of half-edges in these $j$~edges. On a given set of $2j$  half-edges, there are $(2j-1)!! = \frac{(2j)!}{2^{j} j!}$ perfect matchings. Using the upper bound on the degrees from Item~\ref{item:i} of Definition~\ref{def:degseq},
the probability that a set of $j$ edges is present in~$H$ is at most 
$$
\frac{n^{2\rho}}{2m - 1} \frac{n^{2\rho}}{2m - 3} \cdots \frac{n^{2\rho}}{2m - 2j + 1} \leq \left( \frac{n^{2\rho}}{2m - 2j} \right)^{j} \leq \left( \frac{n^{2\rho}}{n} \right)^{j},
$$
where the final inequality follows from the fact that $k \leq (\log n)^2$ and therefore that $2m-2j   \geq \deg_G(S^c) \geq 3 |S^c| =3(n-k)> n$ (as long as~$n$ is sufficiently big). We also have that
$$
\binom{\ell}{2j} \cdot \frac{(2j)!}{2^{j} j!} < \frac{\ell !}{(\ell - 2j)! j!} 
< \frac{\ell^{2j} \emm^{j}}{j^{j}} \\
\leq \left( \frac{\emm \ell^2}{t} \right)^{j}.
$$
Putting everything together, it follows that
 $$
\E[X_{k, \ell, t}] \leq \binom{n}{k} \left( \frac{\emm \ell^2}{t} \right)^{k - 1 + t} \left( \frac{n^{2\rho}}{n} \right)^{k - 1 + t} \\
< \left( \frac{\emm^2 \ell^2}{t} \right)^{k - 1 + t} \frac{n^{2\rho(k - 1 + t)}}{n^{t - 1}}.
 $$ 
Furthermore, since $t > \ell/6$, $k < 2t$, and 
(by the upper bound in Item~\ref{item:i} of Definition~\ref{def:degseq})
$\ell \leq k n^{\rho}   \leq  n^{2\rho}$, 
we have that
\begin{equation}
\label{eq:expbd}
\E[X_{k, \ell, t}] < \frac{(6\emm^2  n^{4\rho})^{3t-1}}{n^{t-1}}.
\end{equation}
Let
\[
X = \sum_{\ell = 36}^{d n} \sum_{k = 1}^{\floor{\min\{ (\log n)^2, \ell/3 \}}} \sum_{t = \floor{\ell/ 6} + 1}^{\floor{d n/2}} X_{k, \ell, t}.
\]
Since $t > \ell/6 \geq 6$, it follows that $t \geq 7$.
For big enough~$n$, \eqref{eq:expbd} 
shows that 
$\E[X_{k, \ell, t}] \leq n^{13\rho t}/n^{t-1}
$.   
Since $\rho \leq 2/91$
and $t\geq 7$, $1-13\rho \geq 5/7 \geq 5/t$
so 
$13\rho t \leq t-5$ and $\E[X_{k, \ell, t}]$ is at most $n^{-4}$.  Taking a union bound over all permissible values for $\ell$, $k$, and $t$, we find that $\E[X] = o(1)$. Applying Markov's inequality, we have that $\Pr(X > 0) = \Pr(X \geq 1) \leq \E[X]= o(1)$, and the result follows.
\end{proof}
\end{lemma}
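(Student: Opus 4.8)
The plan is to run a first-moment (union-bound) argument over all connected vertex sets that could violate the claimed inequality. For positive integers $k,\ell$ and a non-negative integer $t$, let $X_{k,\ell,t}$ count the connected sets $S\subseteq V_H$ with $|S|=k$, $\deg_H(S)=\ell$, and tree-excess $t_{H[S]}=t$; it suffices to show that $\E[X]=o(1)$, where $X:=\sum X_{k,\ell,t}$ ranges over the ``bad'' triples ($k\le(\log n)^2$, $\ell\ge 36$, $t>\ell/6$), since then Markov's inequality yields $\Pr[X\ge 1]=o(1)$. Before estimating the individual expectations I would first shrink the summation range using the two defining properties of $\calD_{n,d}$: the lower bound $x_i\ge 3$ forces $\ell\ge 3k$ (hence $k\le \ell/3$), while $\sum_i x_i\le dn$ forces $\ell\le dn$ and so $t\le \ell/2\le dn/2$; only $O(n^2(\log n)^2)$ triples survive.

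The heart of the proof is bounding $\E[X_{k,\ell,t}]$ for a fixed bad triple. Write $j=k-1+t$ for the number of edges of $H$ with both endpoints in $S$. I would bound the expected count by the product of: (i) at most $\binom{n}{k}$ ways to choose the vertex set $S$; (ii) given $S$ (whose vertices carry $\ell$ half-edges in total), at most $\binom{\ell}{2j}$ ways to select which half-edges lie on the $j$ internal edges and $(2j-1)!!=(2j)!/(2^jj!)$ ways to pair them; and (iii) probability at most $\prod_{s=0}^{j-1}\tfrac{n^{2\rho}}{2m-2s-1}$ that these $j$ edges are realized, using the degree upper bound $x_i\le n^{\rho}$ of Definition~\ref{def:degseq}. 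Because $k$ is so small, $2m-2j\ge \deg_H(S^c)\ge 3(n-k)>n$ for large $n$, so every factor in (iii) is at most $n^{2\rho-1}$; combined with the elementary estimate $\binom{\ell}{2j}(2j-1)!!\le(\emm\ell^2/t)^j$ this gives $\E[X_{k,\ell,t}]<\big(\tfrac{\emm^2\ell^2}{t}\big)^{j}\,\tfrac{n^{2\rho j}}{n^{t-1}}$.

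To finish I would feed in the coupling between $t$ and $\ell$: since $t>\ell/6$ we have $\ell^2/t<6\ell$; since $\ell\le kn^{\rho}\le n^{2\rho}$ we have $\ell\le n^{2\rho}$; and since $k\le\ell/3<2t$ we have $j<3t$. Substituting collapses the estimate to $\E[X_{k,\ell,t}]<(6\emm^2 n^{4\rho})^{3t-1}/n^{t-1}$. Now the hypothesis $\ell\ge 36$ pays off: it forces $t\ge 7$, and with $\rho=1/50$ small enough that $6\emm^2\le n^{\rho}$ and $13\rho t\le t-5$ once $t\ge 7$, the right-hand side is at most $n^{-4}$ for all large $n$. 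Summing over the $O(n^2(\log n)^2)$ surviving triples gives $\E[X]=o(1)$ uniformly over $\vec{x}\in\calD_{n,d}$, and Markov's inequality completes the argument.

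I expect the main obstacle to be calibrating the exponents so that the whole sum converges. There are two sources of blow-up — the $\binom{n}{k}$ choices of vertex set and the factor exponential in the total degree $\ell$ coming from (ii) — competing with a decay of roughly $n^{-1}$ per extra internal edge from (iii). The purpose of the small degree-exponent $\rho$ and of the hypothesis $t>\ell/6$ (which ties the excess to the total degree, and, via $\ell\ge 36$, guarantees $t\ge 7$) is precisely to tip this balance in our favour; obtaining a clean ``$n^{-\Omega(1)}$ per triple'' bound with honest constants is the delicate step, whereas the range restriction, union bound, and final Markov step are routine.
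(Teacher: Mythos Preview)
Your proposal is essentially identical to the paper's proof: the same first-moment setup, the same range restrictions from Definition~\ref{def:degseq}, the same three-factor decomposition of $\E[X_{k,\ell,t}]$ (choice of $S$, choice and pairing of half-edges, probability of realisation), and the same chain of substitutions $\ell^2/t<6\ell$, $\ell\le n^{2\rho}$, $j<3t$ leading to $(6\emm^2 n^{4\rho})^{3t-1}/n^{t-1}\le n^{-4}$ for $t\ge 7$, followed by a union bound and Markov. The numerical calibration you sketch matches the paper's (the paper phrases the final check as $\rho\le 2/91$, which is implied by $\rho=1/50$).
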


To obtain the expansion bounds in  Lemmas~\ref{lem:smallexpansion} and \ref{lem:bigexpansion}, we will use the following result from Fountoulakis and Reed~\cite{fountoulakis2007evolution}. Although this result is stated in \cite{fountoulakis2007evolution} in terms of the random graph model, it is first proved for the configuration model, so this is how we state it. Also, the result in \cite{fountoulakis2007evolution} requires that the vector $\vec{x}$ be in $\calD_{n, d}$ but this is only important for lifting their result to the random graph model, so it is
not relevant for us.

\begin{lemma}[{\cite[Proposition 4.5]{fountoulakis2007evolution}}]
\label{lem:frprob}
When   $H = (V_H, E_H)$ is drawn from $\cm(n, \vec{x})$ for some length-$n$ degree sequence~$\vec{x}$ the following is true for any set $S \subseteq V_H$.
\[
\Pr(e_H(S, S^c) = 0)  \leq	  \binom{m}{\deg_H(S)/2}^{-1},
\]
where $m = \tfrac12\sum_i x_i$. 
\end{lemma}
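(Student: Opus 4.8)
The plan is to compute $\Pr(e_H(S,S^c)=0)$ exactly from the definition of the configuration model, and then to reduce the stated bound to a standard inequality between central binomial coefficients. Write $k=\deg_H(S)$ for the number of half-edges incident to the vertices of $S$; there are $2m$ half-edges in total, and $k\le\deg_H(V_H)=2m$. The event $\{e_H(S,S^c)=0\}$ is precisely the event that the uniformly random perfect matching on the $2m$ half-edges pairs every half-edge incident to $S$ with another half-edge incident to $S$. This is impossible when $k$ is odd, so in that case $\Pr(e_H(S,S^c)=0)=0$ and the bound is trivial; hence I would assume $k=2s$ for a non-negative integer $s$, and note $s\le m$.

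To obtain the exact probability I would expose the matching one pair at a time, at each step matching a not-yet-matched half-edge of $S$. After $2i$ half-edges of $S$ have been matched inside $S$ (for $0\le i\le s-1$), there remain $2m-2i$ unmatched half-edges, of which $2s-2i-1$ lie in $S$ apart from the one currently being matched; so the conditional probability of staying inside $S$ at this step is $(2s-2i-1)/(2m-2i-1)$. Multiplying over $i$ gives
\[
\Pr(e_H(S,S^c)=0)=\prod_{i=0}^{s-1}\frac{2s-2i-1}{2m-2i-1}=\frac{(2s-1)!!\,\bigl(2(m-s)-1\bigr)!!}{(2m-1)!!},
\]
which can equivalently be seen by counting: out of the $(2m-1)!!$ perfect matchings on the half-edges, exactly $(2s-1)!!\,(2(m-s)-1)!!$ keep the half-edges of $S$ internal.

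Next I would rewrite each double factorial via $(2a-1)!!=(2a)!/(2^{a}a!)$. After the powers of $2$ cancel, this turns the right-hand side above into
\[
\Pr(e_H(S,S^c)=0)=\frac{\binom{2s}{s}\binom{2(m-s)}{m-s}}{\binom{2m}{m}\binom{m}{s}},
\]
so the claimed bound $\Pr(e_H(S,S^c)=0)\le\binom{m}{s}^{-1}$ is equivalent to $\binom{2s}{s}\binom{2(m-s)}{m-s}\le\binom{2m}{m}$. This last inequality is immediate from Vandermonde's identity $\binom{2m}{m}=\sum_{j=0}^{m}\binom{2s}{j}\binom{2(m-s)}{m-j}$, since the summand with $j=s$ equals $\binom{2s}{s}\binom{2(m-s)}{m-s}$ and all summands are non-negative (here we use $s\le m$).

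Since every step is either an identity or an elementary inequality, I do not expect a genuine obstacle; the only points that need a little care are the parity reduction that ensures $s=k/2$ is an integer, and the bookkeeping of the powers of $2$ when passing from double factorials to binomial coefficients.
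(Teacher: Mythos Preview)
Your proof is correct. The exact formula for $\Pr(e_H(S,S^c)=0)$ via sequential exposure of half-edges, the rewriting in terms of binomial coefficients, and the reduction to the Vandermonde bound $\binom{2s}{s}\binom{2(m-s)}{m-s}\le\binom{2m}{m}$ are all valid; the edge cases $s=0$ and $s=m$ (where the double factorials $(-1)!!=1$ appear) also work out consistently.

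Note, however, that the paper does not give its own proof of this lemma: it simply quotes the result from Fountoulakis and Reed~\cite[Proposition~4.5]{fountoulakis2007evolution} and remarks that their argument is carried out first for the configuration model. So there is nothing in the paper to compare your argument against. Your write-up supplies a short self-contained proof where the paper relies on a citation, which is a net gain; the only cosmetic point is that the paper phrases the binomial as $\binom{m}{\deg_H(S)/2}$, so you might add a sentence making explicit that $s=\deg_H(S)/2$ to match the statement.
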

Note that Lemma~\ref{lem:frprob} was stated in \cite{fountoulakis2007evolution} for $S$ such that $\deg_H(S)$ is even, but if $\deg_H(S)$ is odd, it is not possible to have $e_H(S,S^c)=0$. 
Next, we show that in a multigraph~$H$ drawn from the configuration model, small vertex subsets satisfy certain expansion properties.

\begin{lemma}
\label{lem:smallexpansion}
Let $d$ be a positive real number. The following is true with high probability when $H = (V_H, E_H)$ is drawn from $\cm(n, \vec{x})$ uniformly over all degree sequences  $\vec{x}   \in \calD_{n, d}$. For all connected vertex sets $S \subseteq V_H$ with $|S|  \leq (\log n)^2$, we have that $e_H(S, S^c) \geq |S|/4$. 
\end{lemma}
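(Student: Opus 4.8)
The plan is a first-moment union bound over connected sets $S\subseteq V_H$ with $|S|\le(\log n)^2$ and over the ``bad'' small values of $e_H(S,S^c)$; this extends the $e_H(S,S^c)=0$ estimate of Lemma~\ref{lem:frprob} to values of $e_H(S,S^c)$ up to roughly $|S|/4$. Fix such a set $S$, write $k=|S|$ and $\ell=\deg_H(S)$, and note that the degree bounds in Definition~\ref{def:degseq} give $3k\le\ell\le kn^{\rho}$, so $\ell\le n^{2\rho}=o(n)$ for large $n$ (using $k\le(\log n)^2$). The combinatorial key is this: if $e_H(S,S^c)=j$, then exactly $j$ of the $\ell$ half-edges incident to $S$ are matched outside $S$, so the remaining $\ell-j$ of them form a set $A$ that is matched entirely within itself; hence $\{e_H(S,S^c)=j\}$ is contained in the union, over the $\binom{\ell}{j}$ ways of choosing which $\ell-j$ half-edges form $A$, of the event that $A$ is internally matched.

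First I would bound, for a fixed set $A$ of $\ell-j$ half-edges, the probability that $A$ is matched within $A$. Revealing the random matching pair-by-pair and processing the half-edges of $A$ first, a fixed perfect matching on $A$ occurs with probability $\prod_{i=1}^{(\ell-j)/2}(2m-2i+1)^{-1}\le(2m-\ell)^{-(\ell-j)/2}$, and there are $(\ell-j-1)!!\le\ell^{(\ell-j)/2}$ perfect matchings on $A$. Since $2m=\sum_i x_i\ge 3n$ and $\ell=o(n)$, we have $2m-\ell\ge 2n$ for large $n$, and therefore
\[
\Pr\big(e_H(S,S^c)=j\big)\ \le\ \binom{\ell}{j}(\ell-j-1)!!\,(2m-\ell)^{-(\ell-j)/2}\ \le\ \ell^{\,j}\Big(\frac{\ell}{2n}\Big)^{(\ell-j)/2}.
\]
(If $\ell-j$ is odd the event is impossible and the bound still holds. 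Alternatively, reusing the count in the proof of Lemma~\ref{lem:frprob} bounds the internal-matching probability by $\binom{m}{(\ell-j)/2}^{-1}$, which gives the same estimate up to constants.)

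Then I would take the union bound. There are at most $\binom{n}{k}\le n^{k}$ sets of size $k$, and such a set is bad only when $e_H(S,S^c)<k/4$, i.e.\ $j<k/4$ (so $\ell-j>3k-k/4=\tfrac{11}{4}k$), which leaves at most $k$ relevant values of $j$. Plugging $\ell\le n^{2\rho}$ and $\ell\ge 3k$ into the displayed inequality, the expected number of bad sets of size $k$ is at most
\[
n^{k}\cdot k\cdot\big(n^{2\rho}\big)^{k/4}\Big(\frac{n^{2\rho}}{2n}\Big)^{11k/8}\ \le\ k\,n^{(-3/8+13\rho/4)k}.
\]
For $\rho=\tfrac1{50}$ the exponent is at most $-\tfrac{3}{10}k$, so this is at most $k\,n^{-3/10}$ for every $k\ge1$; summing over $k\le(\log n)^2$ gives $O\big((\log n)^4\,n^{-3/10}\big)=o(1)$, and Markov's inequality completes the proof.

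The genuinely delicate point is small: one must track the exponent of $n$ and check that it is negative per unit of $k$, but the smallness of $\rho$ leaves ample slack. The conceptual reason for the hypothesis $|S|\le(\log n)^2$ is that it is precisely what guarantees $\ell\le n^{2\rho}$, making $\ell/2n$ polynomially small; without it the estimate above breaks down and one instead needs the separate ``total-degree'' expansion bound (Lemma~\ref{lem:bigexpansion}).
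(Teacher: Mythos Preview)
Your proposal is correct and follows essentially the same route as the paper: a first-moment union bound over candidate sets $S$ and small values $j=e_H(S,S^c)$, combined with the internal-matching estimate (your direct $(\ell-j-1)!!/(2m-\ell)^{(\ell-j)/2}$ bound is exactly the content of Lemma~\ref{lem:frprob}). The paper organises the sum slightly differently (it sums over $\ell$ explicitly and invokes Lemma~\ref{lem:frprob} on the residual degree sequence), but the resulting exponent of~$n$ is the same $-3/8+O(\rho)$ per unit of~$k$, and the conclusion is identical.
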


\begin{proof}
For positive integers $k$ and $\ell$, and a non-negative integer $j$, let the random variable $X_{k, j, \ell}$ denote the number of connected vertex subsets $S \subseteq V_H$ with $|S|=k$, $e_H(S, S^c) = j$, and $\deg_H(S) = \ell$. By   Item~\ref{item:i} of Definition~\ref{def:degseq}, we need only consider 
$\ell$ satisfying
$3k \leq \ell \leq k   n^\rho$.  
Let
\[
X = \sum_{k =1}^{\floor{(\log n)^2}} \sum_{j =0}^{ \floor{k/4}} \sum_{\ell =3k}^{ \floor{k n^{\rho}}} X_{k, j, \ell}.
\]
To prove the lemma we will show that $X = 0$, whp. Consider any integer $k$ in the range $1 \leq k \leq (\log n)^2$, any integer $j$ in the range $0 \leq j < k/4$, and any integer $\ell$ in the range $3k \leq \ell \leq k n^\rho$. There are at most $\binom{n}{k}$ candidates for vertex sets $S$ with $|S|=k$ and $\deg_H(S) = \ell$. There are then at most $\binom{\ell}{j}$ choices for the $j$ half-edges emanating from vertices of $S$ that will be matched with half-edges emanating from vertices of $S^c$, once $H$ is drawn.  Applying  Lemma~\ref{lem:frprob} to the degree sequence derived from~$\vec{x}$ by removing the $j$ half-edges (and their partners), the probability that the remaining $\ell - j$ half-edges are matched amongst themselves is at most
\[
\binom{m'}{(\ell - j)/2}^{-1}  \leq  \left( \frac{(\ell - j)}{2m'} \right)^{\frac{(\ell - j)}{2}} \leq  \left( \frac{k n^\rho}{n} \right)^{\frac{11k}{8}} ,
\]
where $2m' = (\sum_{i=1}^n x_i) - 2j$ and the last inequality follows 
(for big enough~$n$) 
since $11k/4 \leq \ell - j \leq k n^\rho$ and $2m' \geq 3n - 2j > n$. We therefore have that
\begin{align*}
\E[X] &\leq \sum_{k = 1}^{\floor{(\log n)^2}} \sum_{j = 0}^{\floor{k/4}} \sum_{\ell = 3k}^{\floor{k n^\rho}} \binom{n}{k}  \binom{\ell}{j} \left( \frac{k n^\rho}{n} \right)^{\frac{11k}{8}} \\
&\leq \sum_{k = 1}^{\floor{(\log n)^2}} \sum_{j = 0}^{\floor{k/4}} \sum_{\ell = 3k}^{\floor{k n^\rho}} \left( \frac{n\emm}{k} \right)^k \left( \frac{\emm \ell}{j} \right)^j \left( \frac{k n^\rho}{n} \right)^{\frac{11k}{8}} \\
&\leq \sum_{k = 1}^{\floor{(\log n)^2}} \left( \frac{(\log n)^{O(1)}n^{\rho(2+11/8)}}{n^{3/8}} \right)^k.
\end{align*}
This is $o(1)$ since $\rho < 1/9$.
Applying Markov's inequality, we have that $\Pr(X > 0) = \Pr(X \geq 1) = o(1)$, and the result follows.
\end{proof}

We will also require the following lemma.

\begin{lemma}
\label{lem:bigexpansion}
Let $d$ be a positive real number. There is a positive real number $\alpha$ (depending on~$d$) such that the following is true with high probability when $H = (V_H, E_H)$ is drawn from $\cm(n, \vec{x})$ uniformly over all degree sequences  $\vec{x}   \in \calD_{n, d}$. For all connected vertex sets $S \subseteq V_H$ with $(\log n)^2 \leq |S| \leq n/2$, we have that $e_H(S, S^c) \geq \alpha \deg_H(S)$.
\begin{proof}
We split the proof into three cases based on the degree of the vertex set $S \subseteq V_H$:
\begin{enumerate}
\item $\deg_H(S) \leq 100d |S|$,
\item $\deg_H(S) > 100d|S|$ and $\deg_H(S) \leq n/2$, and
\item $\deg_H(S) > 100d|S|$ and $\deg_H(S) > n/2$.
\end{enumerate}
Throughout, we let $m=\tfrac12\sum_{i=1}^n x_i$.

\textbf{Case 1.} 
The proof of this case is similar to that of Lemma 4.1 of~\cite{fountoulakis2007evolution}, but we include it for completeness.
We will show that there is a positive real number $\alpha' < 1/4$ such that whp, every vertex set $S \subseteq V_H$ with $(\log n)^2 \leq |S| \leq n/2$ and $\deg_H(S) \leq 100d |S|$ satisfies $e_H(S, S^c) \geq \alpha' |S| \geq \alpha' \deg_H(S)/ (100d)$. We will eventually require $\alpha'$ to be sufficiently small depending on~$d$.

For positive integers $k$ and $\ell$, and a non-negative integer $j$, let the random variable $X_{k, \ell, j}$ denote the number of connected vertex subsets $S \subseteq V_H$ with $|S| = k$, $e_H(S, S^c) = j$, and $\deg_H(S) = \ell$. Let
\[
X = 
\sum_{k = \ceil{(\log n)^2}}^{\floor{n/2}}
\sum_{j=0}^{\lfloor \alpha' k \rfloor}
\sum_{\ell=1}^{100d k} X_{k, \ell, j}.
\]
Our aim will be to show that $\E[X] = o(1)$, and we begin by further restricting the range of summation in the above. Let $S \subseteq V_H$ be a connected vertex subset with $|S| = k$, $e_H(S, S^c) = j$, and $\deg_H(S) = \ell$. Since there are $(\ell - j)/2$ edges in $e_H(S, S)$, we need only consider $\ell$ such that $\ell - j$ is even. Furthermore, since $S$ is connected, we can assume that $e_H(S, S) \geq k - 1$, so we need only consider values of~$\ell$ satisfying $\ell \geq j + 2(k - 1)$. Finally, since $x_i \geq 3$ for all $i$, and $\sum_i x_i = 2m$, we have that $2m \geq \ell + 3(n - k)$, so we need only consider values of~$\ell$ satisfying $\ell \leq 2m - 3(n - k)$. Let $J_k = \{ j \in \mathbb{Z} : 0 \leq j \leq \alpha' k \}$ and let $L_{k, j} = \{ \ell \in \mathbb{Z} : j + 2(k - 1) \leq \ell \leq \min \{ 2m - 3(n - k), 100d k \},~\ell - j \text{ is even} \}$, we can therefore re-write $X$ as follows
 \[
 X = \sum_{k = \ceil{(\log n)^2}}^{\floor{n/2}} \sum_{j \in J_k} \sum_{\ell \in L_{k, j}} X_{k, \ell, j}.
 \]

Given a set of $k$ vertices, there are at most $\binom{\ell}{j}$ ways to choose the half-edges that will make up $E_H(S, S^c)$ once $H$ is drawn. The remaining total degree of $S$ after choosing these edges is $\ell - j$. By Lemma~\ref{lem:frprob}, the probability that every other edge of of $E_H(S, V_H)$ is an edge of $E_H(S, S)$ is at most $\binom{m'}{(\ell-j)/2}^{-1}$, where $m' = m - j$.  Since  $\ell \in L_{k,j}$, $(\ell-j)/2 \geq k-1$ and  $ (\ell - j)/2 \leq m - 3(n - k)/2 - j/2 $. Since $j < \alpha' k < k/4 \leq n/8$ it follows that $n - k > j$ and therefore that $  (\ell - j)/2 \leq m - 3(n - k)/2 - j/2 \leq m' - n + k$. Thus, we have that
\[
\binom{m'}{(\ell - j)/2}^{-1} \leq \max \left \{ \binom{m'}{k - 1}^{-1}, \binom{m'}{n - k}^{-1} \right \}.
\]
Using Stirling's approximation,  there exists a constant $C$ (see \cite[Equation 7]{fountoulakis2007evolution}) such that for all $n$, $m'$, and $k$ 
\[
\frac{\binom{n}{k}}{\binom{m'}{k - 1}} \leq C \cdot \frac{m'}{k} \left( \frac{n}{m'} \right)^{k}
\quad
\text{and }
\quad
\frac{\binom{n}{k}}{\binom{m'}{n - k}} \leq C \cdot \frac{m'}{m' - n + k + 1} \left( \frac{n}{m'} \right)^{n - k}.
\]
Since $2m' = 2m - 2j \geq 3n - n/4$, it follows that $n/m' \leq 8/11$. Since $k \leq n/2$, it follows that $n - k \geq k$. Thus,
\[
\frac{\binom{n}{k}}{\binom{m'}{k - 1}} \leq C \cdot m' \left( \frac{8}{11} \right)^{k}
\quad
\text{and }
\quad
\frac{\binom{n}{k}}{\binom{m'}{n - k}} \leq C \cdot m' \left( \frac{8}{11} \right)^{k},
\]
therefore 
\[
\binom{n}{k} \binom{m'}{(\ell - j)/2}^{-1} \leq C \cdot m' \left( \frac{8}{11} \right)^k.
\]
From the above, we have that
\begin{align*}
\E[X] &= O(1) \sum_{k = \ceil{(\log n)^2}}^{\floor{n/2}} \sum_{j \in J_k} \sum_{\ell \in L_{k, j}} \binom{n}{k} \binom{\ell}{j} \binom{m'}{(\ell - j)/2}^{-1} \\
&= O(1) \sum_{k = \ceil{(\log n)^2}}^{\floor{n/2}} \sum_{j \in J_k} \sum_{\ell \in L_{k, j}} m'  \binom{\ell}{j} \left( \frac{8}{11} \right)^k.
\end{align*}

We wish to show that for each term $\binom{\ell}{j} \leq (9/8)^k$. Since $j \leq \alpha' k\leq \alpha' \ell<\ell/4$ we have $\binom{\ell}{j} \leq \binom{\ell}{\alpha' k} \leq  (\tfrac{\emm \ell}{ \alpha'k})^{\alpha' k}$ and since $\ell \leq 100d k$ this is at most $(100\emm d/\alpha')^{\alpha' k}$ which is at most $(9/8)^k$ if $\alpha'$ is sufficiently small, depending on~$d$.
(It suffices to take $\alpha' = 10^{-4}/\log d$, for example.)

We conclude that we can upper bound each term by  $m' (9/11)^k \leq d n(9/11)^k$ and therefore
\[
\E[X] = O(n) \sum_{k = \ceil{(\log n)^2}}^{\floor{n/2}} \sum_{j \in J_k} \sum_{\ell \in L_{k, j}} \left( \frac{9}{11} \right)^k = O(n) \sum_{k = \ceil{(\log n)^2}}^{\floor{n/2}} k^2 \left( \frac{9}{11} \right)^k.
\]
It follows from the above that $\E[X] = o(1)$. Applying Markov's inequality, we have that $\Pr(X > 0) = \Pr(X \geq 1) = o(1)$, and the result follows.

\textbf{Case 2.} We will do more than is required, and show that every connected vertex set $S \subseteq V_H$ with $(\log n)^2 \leq \deg_H(S) \leq n/2$ satisfies $e_H(S, S^c) \geq \deg_H(S)/100$.

For every positive integer $\ell$ 
satisfying $(\log n)^2 \leq \ell \leq n/2$
and every non-negative integer $j$ satisfying $j \leq \ell/100$, 
let $X_{\ell, j}$ denote the number of connected vertex subsets $S \subseteq V_H$ with $e_H(S, S^c) = j$ and $\deg_H(S) = \ell$. Let $J_\ell = \{ j \in \mathbb{Z} : 0 \leq j \leq \ell/100, \ell - j \text{ is even} \}$ and let
\[
X = 
\sum_{\ell = \ceil{(\log n)^2}}^{\floor{n/2}}
\sum_{j \in J_\ell}
X_{\ell, j}.
\]
Our aim will be to show that $\E[X] = o(1)$.

In order for a set $S \subseteq V_H$ to contribute to~$X_{\ell,j}$
its size, $s$, must be at most $(\ell-j)/2+1$ (otherwise, the
$(\ell-j)/2$ internal edges won't be able to connect~$S$).
The number of size-$s$ sets $S$ is at most
$\binom{n}{s}$ which is at most $ \binom{n}{(\ell - j)/2 + 1}$,
as long as $n$ is sufficiently large. The number of possible 
values of~$s$ is at most $(\ell-j)/2 \leq \ell$.
Thus, the number of possibilities for~$S$ is
at most   $\ell \cdot \binom{n}{(\ell - j)/2 + 1}$ 
  
Given $S$, there are at most $\binom{\ell}{j}$ ways to choose the half-edges that will make up $E_H(S, S^c)$ once $H$ is drawn. The remaining total degree of $S$ after choosing these edges is $\ell - j$. By Lemma~\ref{lem:frprob}, the probability that every other edge of of $E_H(S, V_H)$ is an edge of $E_H(S, S)$ is at most $\binom{m'}{(\ell-j)/2}^{-1}$, where $m' = m - j$. As noted in Case~1, it is shown using Stirling's approximation (see equation $7$) in~\cite{fountoulakis2007evolution} 
that there exists a constant $C$ such that
for all $n$, $\ell$, $j$ and~$m'$,
\[
\binom{n}{(\ell - j)/2 + 1} \binom{m'}{(\ell - j)/2}^{-1} \leq C m' \left( \frac{n}{m'} \right)^{\frac{\ell - j}{2} + 1}.
\]
Since 
$2m \geq 3n$ by  Item~\ref{item:i} of Definition~\ref{def:degseq} 
and $2j \leq 2\ell/100 \leq n/100$,
$2m' = 2m - 2j \geq 3n - n/100$ so $n/m' < 7/10$. Thus,
\[
\binom{n}{(\ell - j)/2 + 1} \binom{m'}{(\ell - j)/2}^{-1} \leq C m' \left( \frac{7}{10} \right)^{\frac{\ell - j}{2} + 1} \leq C m' \left( \frac{7}{10} \right)^{\frac{\ell - \ell/100}{2}} \leq C m' \left( \frac{17}{20} \right)^{\ell}.
\]
Furthermore, we have that
\[
\binom{\ell}{j} \leq \binom{\ell}{\ell/100} \leq (100\emm)^{\ell/100}.
\]
Putting things together, it follows that
\begin{align*}
\E[X] &\leq \sum_{\ell = \ceil{(\log n)^2}}^{\floor{n/2}} \sum_{j \in J_\ell} \ell \cdot \binom{\ell}{j} \binom{n}{(\ell - j)/2 + 1} \binom{m'}{(\ell - j)/2}^{-1} \\
&\leq C \sum_{\ell = \ceil{(\log n)^2}}^{\floor{n/2}} 
\ell^2 m' \left( \frac{17}{20} \right)^{\ell}(100\emm)^{\ell/100}.
\end{align*}
Now $m' \leq m$ and this is $O(n)$ by Item~\ref{item:ii} of Definition~\ref{def:degseq}, so
$$
\E[X]
= O(n) \sum_{\ell = \ceil{(\log n)^2}}^{\floor{n/2}} \ell^2 \cdot \left( \frac{9}{10} \right)^\ell = o(1).
$$  
Applying Markov's inequality, we have that $\Pr(X > 0) = \Pr(X \geq 1) = o(1)$, and the result follows.

\textbf{Case 3.} Finally, we deal with vertex sets $S \subseteq V_H$ with $(\log n)^2 \leq |S| \leq n/2$ and $\deg_H(S) > \max\{ 100 d |S|, n/2\}$. 
Let $C = 10^4 d$.
By the Cauchy-Schwarz inequality, we have that $|S|\sum_{i\in S} x_i^2\geq (\deg_H(S))^2\geq 10^4d^2|S|^2$, so using Item~\ref{item:ii} of Definition~\ref{def:degseq} 
which ensures that $ \sum_{i\in S}x_i^2 \leq d n$,
we find that $|S|\leq n/C$. 

Let $f=(\emm C)^{1/C}$.
The number of sets $S$ satisfying $|S|\leq n/C$
is at most $n\binom{n}{n/C} \leq n f^n$
since there are at most $n$ possibilities for $|S|$ to consider, and  
for each of them $\binom{n}{|S|} \leq \binom{n}{n/C}$.

Fix any set $S \subseteq V_H$ with $|S|\leq n/C$
and consider the random construction of~$H$, starting from half-edges in~$S^c$ (and choosing their mates in the pairing).
Let 
$$j = \left\lfloor\frac{\deg_H(S^c)}{2}\right\rfloor \geq \left\lfloor \frac{3 |S^c|}{2}\right\rfloor \geq 
\left\lfloor \frac{3n \big(1-\frac{1}{C}\big)}{2}\right\rfloor \geq \frac{3n \big(1-\frac{2}{C}\big)}{2},$$
where the first inequality uses the fact that each $x_i$ is at least~$3$ (from Item~\ref{item:i} of
Definition~\ref{def:degseq}) and the
final inequality uses the fact that $n$ is sufficiently large.

Note that the process initiates a pairing from at least $j$ half-edges in~$S^c$.
For each $i \in [j]$, let $Y_i$ be the indicator random variable for the event that the $i$'th
half-edge from which pairing is initiated connects to an endpoint in~$S$ 
(conditioned on the pairings of the first $i-1$ half-edges initiated from~$S^c$).

Let $\epsilon = 3(1-2/C)/(8\sqrt{d})\leq 1/2$. 
From Item~\ref{item:ii} in Definition~\ref{def:degseq} and
the Cauchy-Schwarz inequality, 
${(\sum x_i)}^2 \leq n \sum x_i^2 \leq n d$ so
$\sum_{i=1}^n x_i \leq \sqrt{d} n$.
For any $t\in [j]$ satisfying $\sum_{i=1}^{t-1}Y_t < \epsilon n/2$
we have
$$\Pr(Y_t = 1)\geq \frac{\deg_H(S) - \epsilon n/2}{\sqrt{d}n} > \frac{1-\epsilon}{2\sqrt{d}}\geq \frac{1}{4\sqrt{d}}.$$
Now let $X_1,\ldots,X_j$ be i.i.d.{} Bernoulli random variables which are~$1$
with probability~$ 1/(4\sqrt{d})$.
We can couple the evolution of these variables so that, for any $t\in[j]$ satisfying
$\sum_{i=1}^{t-1}Y_i < \epsilon n/2$, we have
$\sum_{i=1}^t Y_i \geq \sum_{i=1}^t X_i$. 
We conclude that $\Pr(\sum_{i=1}^j Y_i < \epsilon n/2) \leq \Pr(\sum_{i=1}^j X_i < \epsilon n/2)$.
 
To conclude we will show that
$n f^n \Pr(\sum_{i=1}^j X_i < \epsilon n/2)  = o(1)$, implying that we can take 
$\alpha'' = \epsilon/(2\sqrt{d})$ since $\epsilon n/2 = \alpha'' \sqrt{d} n \geq \alpha'' \deg_H(S)$.

Let $X = \sum_{i=1}^j X_i$ and $\delta=1/2$.
Note that $\E[X] = j/(4\sqrt{d})$
and that   $$\frac{(1-\delta) j}{4\sqrt{d}} \geq 
\frac{(1-\delta) 3n \big(1-\frac{2}{C}\big)}{8\sqrt{d}} 
=  \frac{\epsilon n}{2}.$$
By a Chernoff bound, 
$\Pr(X \leq \epsilon n/2) \leq 
\Pr(X \leq (1-\delta) j/(4\sqrt{d})) \leq  \exp(-j\delta^2/(8\sqrt{d}))$.

To conclude that $n f^n \exp(-j\delta^2/(8\sqrt{d})) = o(1)$
we observe 
that $f < \exp(3(1-2/C)\delta^2/(16\sqrt{d}))$. 
So with $\alpha''=\epsilon/(2\sqrt{d})$, we conclude Case~3.

To prove the lemma, combine the three cases by taking
$\alpha = \min\{\alpha'/(100 d),1/100,\alpha''\}$.

\end{proof}
\end{lemma}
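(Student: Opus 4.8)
The plan is a first-moment (union-bound) argument over all connected vertex sets $S\subseteq V_H$ with $(\log n)^2\le|S|\le n/2$, controlling $e_H(S,S^c)$ through the Fountoulakis--Reed estimate of Lemma~\ref{lem:frprob}: the probability that a fixed vertex set has no edge to its complement is at most $\binom{m}{\deg_H(S)/2}^{-1}$, and once we pin down which $j$ half-edges of $S$ are to cross the cut, the conditional probability that the remaining half-edges of $S$ pair up internally is bounded by the same estimate applied to the degree sequence with those $j$ half-edges (and their partners) deleted. I would split the analysis into three regimes according to how $\deg_H(S)$ compares with $|S|$ and with $n$: (i) $\deg_H(S)\le 100d|S|$; (ii) $100d|S|<\deg_H(S)\le n/2$; and (iii) $\deg_H(S)>\max\{100d|S|,n/2\}$. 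The union bound behaves quite differently in each.

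In regime~(i) it is enough to prove the stronger bound $e_H(S,S^c)\ge\alpha'|S|$ for a small $\alpha'=\alpha'(d)$, because $\deg_H(S)=O(d|S|)$ then converts it to a bound linear in $\deg_H(S)$. Fixing $k=|S|$, $\ell=\deg_H(S)$, $j=e_H(S,S^c)$, there are at most $\binom{n}{k}$ candidate sets, at most $\binom{\ell}{j}$ choices of crossing half-edges, and probability at most $\binom{m'}{(\ell-j)/2}^{-1}$ of internal pairing with $m'=m-j$; the Stirling estimate of \cite[Equation~7]{fountoulakis2007evolution} turns $\binom{n}{k}\binom{m'}{(\ell-j)/2}^{-1}$ into $O(m')(8/11)^k$ --- here Item~\ref{item:i} of Definition~\ref{def:degseq} gives $2m\ge 3n$, so $n/m'$ stays bounded away from $1$ --- while $\binom{\ell}{j}\le(9/8)^k$ once $\alpha'$ is small enough relative to $d$ (using $\ell\le 100dk$). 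Summing $O(n)(9/11)^k$ over $k\ge(\log n)^2$ is $o(1)$, and Markov finishes the regime. Regime~(ii) is similar but indexed directly by $\ell=\deg_H(S)$ and $j=e_H(S,S^c)$: a connected set of size $s$ with these parameters needs $s\le(\ell-j)/2+1$, so there are at most $\ell\binom{n}{(\ell-j)/2+1}$ candidates, and combining $\binom{\ell}{j}\le(100\emm)^{\ell/100}$ for $j\le\ell/100$ with $\binom{n}{(\ell-j)/2+1}\binom{m'}{(\ell-j)/2}^{-1}=O(m')(n/m')^{(\ell-j)/2+1}$ (again with $n/m'$ bounded below $1$) yields a sum that is geometric in $\ell$, hence $o(1)$; this proves $e_H(S,S^c)\ge\deg_H(S)/100$ here.

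Regime~(iii) is where I expect the real work. The first-moment method collapses here, since there can be exponentially many small sets while Lemma~\ref{lem:frprob} only excludes $e_H(S,S^c)=0$ and cannot force $e_H(S,S^c)$ to be a constant fraction of the (very large) $\deg_H(S)$. Instead I would argue by concentration for a single fixed~$S$ together with a cheap union bound. First, Cauchy--Schwarz applied to $\sum_{i\in S}x_i^2\le dn$ (Item~\ref{item:ii} of Definition~\ref{def:degseq}), combined with $\deg_H(S)\ge 100d|S|$, forces $|S|\le n/(10^4 d)$, so the number of sets to consider is at most $n\binom{n}{n/(10^4 d)}\le n f^n$ with $f=(10^4\emm d)^{1/(10^4 d)}$ close to~$1$. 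Now fix such an~$S$ and reveal the pairing by repeatedly choosing an unpaired half-edge incident to~$S^c$ and exposing its partner; since every degree is at least~$3$ there are at least $\lfloor\tfrac12\deg_H(S^c)\rfloor=\Omega(n)$ such half-edges, and since $2m=\sum_i x_i\le\sqrt d\,n$ by Cauchy--Schwarz, as long as fewer than $\epsilon n/2$ of them have so far landed in~$S$ the next one lands in~$S$ with probability at least $(\deg_H(S)-\epsilon n/2)/(\sqrt d\,n)\ge\tfrac{1}{4\sqrt d}$. Coupling this process below a sequence of i.i.d.\ $\mathrm{Bernoulli}(\tfrac{1}{4\sqrt d})$ variables and applying a Chernoff bound gives $\Pr(e_H(S,S^c)<\epsilon n/2)\le\exp(-\Omega(n))$ at a rate beating $f^n$; a union bound then shows $e_H(S,S^c)\ge\epsilon n/2$ whp, which is a constant fraction of $\deg_H(S)\le 2m\le\sqrt d\,n$.

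Taking $\alpha$ to be the minimum of the three constants obtained (of order $\alpha'/(100d)$, $1/100$, and $\epsilon/(2\sqrt d)$ respectively) and a union bound over the three regimes completes the proof. The main obstacle, as noted, is regime~(iii): making the union bound cheap enough ($f$ close to~$1$) that an exponentially small concentration bound can beat it, which is exactly why the Cauchy--Schwarz reduction $|S|\le n/(10^4 d)$ and the sequential-pairing/Chernoff argument are needed there in place of a naive first moment.
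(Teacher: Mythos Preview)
Your proposal is correct and follows essentially the same approach as the paper: the identical three-regime split on $\deg_H(S)$ versus $|S|$ and $n$, the same first-moment/Fountoulakis--Reed estimate in regimes~(i) and~(ii) with the Stirling bound from \cite[Equation~7]{fountoulakis2007evolution}, and in regime~(iii) the same Cauchy--Schwarz reduction to $|S|\le n/(10^4 d)$ followed by the sequential-pairing coupling with i.i.d.\ Bernoulli$(1/(4\sqrt d))$ variables and a Chernoff bound beating the $f^n$ union-bound cost. Even the constants and the final $\alpha=\min\{\alpha'/(100d),1/100,\epsilon/(2\sqrt d)\}$ match.
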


We can now prove the following result, which establishes the desired expansion properties of the multigraphs generated by the configuration model.

\begin{lemma}\label{lem:lem}
Let $d$ be a positive real number. There is a positive real number $ \alpha$ (depending on~$d$) such that  the following is true with high probability when $H = (V_H, E_H)$ is drawn from $\cm(n, \vec{x})$ uniformly over all degree sequences  $\vec{x}   \in \calD_{n, d}$. For all connected vertex sets $S \subseteq V_H$ with $|S| \leq n/2$, we have that $e_H(S, S^c) \geq \alpha \deg_H(S)$.
\end{lemma}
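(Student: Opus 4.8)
The plan is to combine the three preceding lemmas, splitting according to the size of the connected set $S$. The range $(\log n)^2 \leq |S| \leq n/2$ is handled directly by Lemma~\ref{lem:bigexpansion}, which already provides $e_H(S, S^c) \geq \alpha' \deg_H(S)$ for some $\alpha' > 0$ depending only on~$d$, with high probability. So the only work is to treat connected sets $S$ with $|S| \leq (\log n)^2$, where the task is to upgrade the size-based bound of Lemma~\ref{lem:smallexpansion}, namely $e_H(S, S^c) \geq |S|/4$, into a bound in terms of the total degree $\deg_H(S)$.

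The key identity is that, since $S$ is connected, the induced multigraph $H[S]$ has $e_H(S,S) = |S| - 1 + t_{H[S]}$ internal edges, and hence $\deg_H(S) = 2 e_H(S, S) + e_H(S, S^c) = 2(|S| - 1) + 2 t_{H[S]} + e_H(S, S^c)$. I would split the small-$S$ case further according to whether $\deg_H(S) < 36$ or $\deg_H(S) \geq 36$. If $\deg_H(S) < 36$, then Lemma~\ref{lem:smallexpansion} gives $e_H(S,S^c) \geq |S|/4 \geq 1/4$, and since $e_H(S,S^c)$ is a non-negative integer this forces $e_H(S,S^c) \geq 1 > \deg_H(S)/36$. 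If instead $\deg_H(S) \geq 36$, then Lemma~\ref{lem:excess} applies and yields $t_{H[S]} \leq \tfrac16 \deg_H(S)$; substituting into the identity gives $\deg_H(S) \leq 2(|S|-1) + \tfrac13 \deg_H(S) + e_H(S,S^c)$, hence $\tfrac23 \deg_H(S) \leq 2|S| + e_H(S,S^c)$, and then Lemma~\ref{lem:smallexpansion} in the form $|S| \leq 4\, e_H(S,S^c)$ gives $\tfrac23 \deg_H(S) \leq 9\, e_H(S,S^c)$, i.e. $e_H(S,S^c) \geq \tfrac{2}{27}\deg_H(S)$.

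Finally, taking $\alpha = \min\{\alpha', \tfrac{1}{36}, \tfrac{2}{27}\}$, which is positive and depends only on~$d$, all three regimes give $e_H(S, S^c) \geq \alpha \deg_H(S)$ simultaneously. Since each of Lemmas~\ref{lem:excess}, \ref{lem:smallexpansion} and \ref{lem:bigexpansion} holds with high probability, we only pay a union bound over three events, so the combined statement holds with high probability, uniformly over $\vec{x} \in \calD_{n,d}$. I do not expect a genuine obstacle here: all the real content lives in the three preceding lemmas, and the only mild subtlety is that Lemma~\ref{lem:excess} is unavailable for sets of bounded total degree, which is why those are disposed of separately using the integrality of $e_H(S,S^c)$.
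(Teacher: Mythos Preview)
Your proposal is correct and follows essentially the same approach as the paper: the same three-way case split (large $|S|$ via Lemma~\ref{lem:bigexpansion}; small $|S|$ with $\deg_H(S)\geq 36$ via the identity $\deg_H(S)=2(|S|-1)+2t_{H[S]}+e_H(S,S^c)$ together with Lemmas~\ref{lem:excess} and~\ref{lem:smallexpansion}; small $|S|$ with $\deg_H(S)<36$ via Lemma~\ref{lem:smallexpansion} alone). The only cosmetic differences are that in the middle case the paper splits once more according to whether $|S|\leq \tfrac16\deg_H(S)$ (applying just one of the two lemmas in each subcase) whereas you combine both lemmas directly, and in the small-degree case you invoke integrality of $e_H(S,S^c)$ to get the constant $1/36$ where the paper simply uses $1/4>\deg_H(S)/144$.
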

\begin{proof} 
We consider three cases.
\begin{description}
 \item {\bf Case 1.}
Consider all connected subsets $S \subseteq V_H$   with $(\log n)^2 \leq |S| \leq n/2$. By Lemma~\ref{lem:bigexpansion} there is a positive real number $\alpha'$ such that, whp, every such subset~$S$ has $e_H(S, S^c) \geq \alpha' \deg_H(S)$. 

\item {\bf Case 2.}
Consider all connected subsets
$S \subseteq V_H$   with $|S| \leq (\log n)^2$ and $\deg_H(S) \geq 36$.

\begin{itemize}
    \item   Consider first 
    those subsets $S$ with $|S| \leq \frac{1}{6} \deg_H(S)$. We have that
\[
e_H(S, S^c) = \deg_H(S) - 2(t_{H[S]} + |S| - 1) \geq \frac{2}{3} \deg_H(S) - 2|S| \geq \frac{1}{3} \deg_H(S),
\]
by Lemma~\ref{lem:excess} and our assumption on the size of $S$. 
\item Now consider those subsets $S$ with  $|S| > \frac{1}{6} \deg_H(S)$, then by Lemma~\ref{lem:smallexpansion}, we have that
$
e_H(S, S^c) \geq |S|/4 \geq \deg_H(S)/24$. \end{itemize}
\item {\bf Case 3.}
Finally, consider connected subsets $S\subseteq V_H$ with   
 $|S| \leq (\log n)^2$ and
$\deg_H(S) < 36$. 

By Lemma~\ref{lem:smallexpansion}, we have that
$
e_H(S, S^c) \geq |S|/4 \geq 1/4 = 36/144 > \deg_H(S)/144$.
\end{description}
The result follows from the three cases by taking $\alpha = \min \{ 1/144, \alpha' \} = \alpha'$. 
\end{proof} 

Using the definition of $\calG_\alpha$ and
Lemma~\ref{lem:cmtog}, we have the following corollary of  Lemma~\ref{lem:lem}.

\begin{corollary}
\label{cor:expander}
Let $d$ be a real number. There  is a positive real number 
$\alpha$ (depending on~$d$) such that the following holds. With high probability, when $G\sim \calG(n, \vec{x})$ for some  $\vec{x} \in \calD_{n, d}$, it holds that $G\in \calG_\alpha$. 
\end{corollary}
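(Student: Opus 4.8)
The plan is to obtain Corollary~\ref{cor:expander} as a direct combination of the configuration-model expansion bound in Lemma~\ref{lem:lem} with the transfer principle of Lemma~\ref{lem:cmtog}. First I would fix the positive real number $\alpha$ (depending only on $d$) supplied by Lemma~\ref{lem:lem}, and then define, for each positive integer $n$, the set $\calE_n$ to be the collection of all $n$-vertex multigraphs $H$ with the property that every connected vertex subset $S\subseteq V_H$ with $|S|\leq |V_H|/2$ satisfies $e_H(S,S^c)\geq\alpha\deg_H(S)$. This is exactly the defining property of $\alpha$-total-degree expanders, so a simple $n$-vertex graph lies in $\calE_n$ if and only if it lies in $\calG_\alpha$; and, crucially, $\calE_n$ is a genuine property of $n$-vertex multigraphs, so Lemma~\ref{lem:cmtog} is applicable to it.

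Next I would invoke Lemma~\ref{lem:lem} to conclude that, for $H\sim\cm(n,\vec{x})$ drawn uniformly over all degree sequences $\vec{x}\in\calD_{n,d}$, we have $H\in\calE_n$ with high probability, i.e.\ $\Pr(H\notin\calE_n)=o(1)$ uniformly over $\vec{x}\in\calD_{n,d}$. Since every $\vec{x}\in\calD_{n,d}$ satisfies Item~\ref{item:ii} of Definition~\ref{def:degseq}, the hypotheses of Lemma~\ref{lem:cmtog} are met, and applying it yields $\Pr(G\notin\calE_n)=o(1)$ when $G\sim\calG(n,\vec{x})$. By the observation above, $G\in\calE_n$ is the same as $G\in\calG_\alpha$, and so $G\in\calG_\alpha$ with high probability, which is the assertion of the corollary.

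There is essentially no hard step here; the corollary is a bookkeeping assembly of two lemmas already proved in this section. The only points that need a little care are: (i) checking that the ``uniformly over $\vec{x}$'' quantifier built into the notion of ``with high probability'' (Definition~\ref{def:degseq}) is preserved by Lemma~\ref{lem:cmtog} --- this is immediate from the inequality $\Pr(G\notin\calE_n)\leq\Pr(H\notin\calE_n)/p$ in its proof, where the constant $p$ does not depend on $\vec{x}$; and (ii) making sure the quantifier range in Lemma~\ref{lem:lem} (all connected $S$ with $|S|\leq n/2$) matches the definition of $\calG_\alpha$ (all connected $S$ with $|S|\leq|V_G|/2$), which it does verbatim.
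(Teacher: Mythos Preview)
Your proposal is correct and follows exactly the approach the paper intends: the corollary is stated immediately after Lemma~\ref{lem:lem} as a direct consequence of combining that lemma with the transfer principle of Lemma~\ref{lem:cmtog} and the definition of $\calG_\alpha$. Your additional remarks on the uniformity in $\vec{x}$ and the matching of quantifier ranges are accurate sanity checks but not needed beyond what the paper's one-line derivation implies.
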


Combining Corollary~\ref{cor:expander} with Theorem~\ref{thm:expandersampling} implies our main theorem.

\randgrsampling*
\begin{proof}
Let $d$ be a real number and let $q\geq 2$ be an integer.
Let $\alpha$ be the positive real number from Corollary~\ref{cor:expander}.
Let $\beta_0 = \tfrac{3}{\alpha} \log(8 \emm^3(q-1))$.

Consider 
$\vec{x} \in \calD_{n, d}$
and let~$G$ be drawn  from $\calG(n, \vec{x})$.    
By Corollary~\ref{cor:expander}, $G \in \calG_\alpha$ whp. The result then follows by using the algorithms  from Theorem~\ref{thm:expandersampling}.
\end{proof}

\begin{remark}\label{rem:remarkdetails}
The bounds on~$\beta$ in Remark~\ref{rem:ourrem} follow from the choice of~$\beta_0$ in the proof
of Theorem~\ref{thm:randgrsampling} and from the fact that $\alpha = \Omega(\tfrac{1}{d\log d})$
which follows from the proofs of Lemmas~\ref{lem:bigexpansion} and~\ref{lem:lem}.
The running time bounds in Remark~\ref{rem:ourrem} come from those
in Theorem~\ref{thm:expandersampling} using the fact that $|E_G| = O(n)$ which follows from Item~\ref{item:ii} of Definition~\ref{def:degseq}.
\end{remark}

\section{Details of the algorithms for the edge-based polymer model}
\label{sec:details}

We start with a lemma giving an upper bound on the number of
vertex subsets of a given total degree.
This lemma is used throughout this section.

\npolymers*

\begin{proof}
Let $N(G, v, \ell)$ be the set of subtrees $T = (V_T, E_T)$ of $G$ such that $v \in V_T$, $\deg_G(V_T) = \ell$. We will show that $|N(G, v, \ell)| \leq (2\emm)^{2 \ell - 1}$, which gives us the desired result for the following reason. Let $S \subseteq V_G$ be a connected vertex subset such that $v \in S$ and $\deg_G(S) = \ell$. Since $S$ is connected, it has at least one spanning tree $T = (V_T = S, E_T)$ such that $v \in V_T$ and $\deg_G(V_T) = \ell$. Since $S$ is the unique connected vertex subset that $T$ spans, this gives us an injective map from the set of all connected vertex subsets containing $v$ with total degree $\ell$, to $N(G, v, \ell)$. 

We now give an injective map from $N(G, v, \ell)$ to $T^*(2\ell, 3)$ -- the set of subtrees of size $2\ell$ that contain the root, of the infinite rooted $3$-regular tree. By a result of Bollob{\'a}s~\cite[p. 129]{bollobas2006art}, we know that $|T^*(2 \ell, 3)|$ is at most $(2\emm)^{2 \ell - 1}$. Let $T = (V_T, E_T)$ be a subtree from $N(G, v, \ell)$. We will map $T$ to a rooted subtree $T' = (V_{T'}, E_{T'})$ from $T^*(2\ell, 3)$. For each vertex of $V_G$, fix an ordering of its neighbours. In the infinite rooted $3$-regular tree, label the edges incident to the root with $\{ 1, 2, 3 \}$, and for each other vertex label the edges connecting it to its two children with $\{ 1, 2 \}$. As we construct $T'$, we will label its edges so that it is clear which subtree from $T^*(2 \ell, 3)$ we are constructing, we will also label some of its vertices. We construct $T'$ as follows (see Figure~\ref{fig:trees} for an example of the following construction).
\begin{enumerate}
\item Add the root to $V_{T'}$ and label it $v$.
\item While there is a labelled vertex of $T'$ (call its label $u$) such that $u$ has a child $w$ in $T$ but no vertex of $T'$ is labelled $w$, then we do the following. First, we create a path $P$ of length $\deg_G(u)$ where each edge is labelled $1$. We then connect the vertex of $T'$ labelled $u$ to $P$ via an edge labelled $1$. Finally, for $1 \leq i \leq \deg_G(u)$, we connect a vertex labelled $w$  to the $i^{\text{th}}$ vertex of $P$ via an edge labelled $2$, if $w$ is the $i^{\text{th}}$ neighbour of $u$ in $G$ and $w$ is a child of $u$ in $T$.
\end{enumerate}

\begin{figure}[ht]
\begin{subfigure}[b]{0.32\textwidth}
\centering
\raisebox{10mm}{
\begin{tikzpicture}[sibling distance=1em, every node/.style = {shape = circle, fill = black, minimum size = 7pt, inner sep = 0pt}]]
\node[label={\small $u$}] {}
child { node[label=below:{\scriptsize $u_1$}] {} }
child { node[label=below:{\scriptsize $u_2$}] {} }
child { node[label=below:{\scriptsize $u_3$}] {} }
child { node[label=below:{\scriptsize $u_4$}] {} }
child { node[label=below:{\scriptsize $u_5$}] {} }
child { node[label=below:{\scriptsize $u_6$}] {} };
\end{tikzpicture}
}
\caption{Neighbourhood of $u$ in $G$}
\end{subfigure}
\begin{subfigure}[b]{0.32\textwidth}
\centering
\raisebox{10mm}{
\begin{tikzpicture}[sibling distance=1em, every node/.style = {shape = circle, fill = black, minimum size = 7pt, inner sep = 0pt}]]
\node[label={\small $u$}] {}
child { node[label=below:{\scriptsize $u_1$}] {} }
child { node[label=below:{\scriptsize $u_3$}] {} }
child { node[label=below:{\scriptsize $u_5$}] {} }
child { node[label=below:{\scriptsize $u_6$}] {} };
\end{tikzpicture}
}
\caption{Neighbourhood of $u$ in $T$}
\end{subfigure}
\begin{subfigure}[b]{0.32\textwidth}
\centering
\raisebox{3mm}{
\begin{tikzpicture}
\node[Vtx, label=above :{$u$}] (u) at (0,0){};

\node[Vtx, label=above :{}] (u1) at (0.5,-0.5){};
\node[Vtx, label=left :{\scriptsize $u_1$}] (u1h) at (0.1,-1.2){};

\node[Vtx, label=above :{}] (u2) at (1,-1){};

\node[Vtx, label=above :{}] (u3) at (1.5,-1.5){};
\node[Vtx, label=left :{\scriptsize $u_3$}] (u3h) at (1.1,-2.2){};

\node[Vtx, label=above :{}] (u4) at (2,-2){};

\node[Vtx, label=above :{}] (u5) at (2.5,-2.5){};
\node[Vtx, label=left :{\scriptsize $u_5$}] (u5h) at (2.1,-3.2){};

\node[Vtx, label=above :{}] (u6) at (3,-3){};
\node[Vtx, label=left :{\scriptsize $u_6$}] (u6h) at (2.6,-3.7){};
    
\draw (u) --node[above right=-0.5ex] {\scriptsize $1$} (u1) --node[above right=-0.5ex] {\scriptsize $1$} (u2) --node[above right=-0.5ex] {\scriptsize $1$} (u3) --node[above right=-0.5ex] {\scriptsize $1$} (u4) --node[above right=-0.5ex] {\scriptsize $1$} (u5) --node[above right=-0.5ex] {\scriptsize $1$} (u6);
\draw (u1) --node[above left=-0.5ex] {\scriptsize $2$} (u1h);
\draw (u3) --node[above left=-0.5ex] {\scriptsize $2$} (u3h);
\draw (u5) --node[above left=-0.5ex] {\scriptsize $2$} (u5h);
\draw (u6) --node[above left=-0.5ex] {\scriptsize $2$} (u6h);
\end{tikzpicture}
}
\caption{Neighbourhood of $u$ in $T'$}
\end{subfigure}
\caption{}
\label{fig:trees}
\end{figure}

Each $T \in N(G, v, \ell)$ maps to a different $T' \in T^*(2 \ell, 3)$. When constructing $T'$, we used edge labels from $\{ 1, 2, 3 \}$, therefore the maximum degree of $T'$ is $3$. For each $v \in V_T$, we added at most $2 \deg_G(v)$ vertices to $T'$, therefore the size of $T'$ is at most $2 \deg_G(V_T) = 2 \ell$.
\end{proof}  

The following lemma shows that if a family of polymer models satisfies the polymer sampling condition then it also satisfies the polymer mixing condition. This will be convenient when applying our algorithms, as it is in general easier to show that the polymer sampling condition holds for a given family of polymer models, than to show directly that the polymer mixing condition does.
However, the polymer mixing condition is used to bound the
mixing time of the polymer dynamics.

For a vertex subset $S \subseteq V_G $, we let $\partial_G S$ denote the vertices of $S^c$ that are joined to $S$ by an edge.

\begin{lemma}
\label{lem:samplingtomixing}
Let $q\geq 2$ be an  integer, $\calG$ be a class of graphs, and $\calF_\calG = \{ (\calC_G, w_G) \mid G \in \calG \}$ be a family of $q$-spin polymer models. If $\calF_\calG$ satisfies the polymer sampling condition (Definition~\ref{def:psample}) then $\calF_\calG$ satisfies the polymer mixing condition (Definition~\ref{def:pmix}) with constant $\theta = \frac{1}{e}$.
\end{lemma}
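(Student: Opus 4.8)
I would prove this by reducing to a per-vertex estimate. First, fix $G\in\calG$ and $\gamma\in\calC_G$; I may assume $|E_\gamma|\ge1$, since otherwise $V_\gamma$ is a single isolated vertex, $\partial_G V_\gamma=\emptyset$, and both sides of the inequality in Definition~\ref{def:pmix} vanish. Unwinding the compatibility relation, a polymer $\gamma'$ satisfies $\gamma'\nsim\gamma$ precisely when some vertex of $V_{\gamma'}$ lies within graph-distance $1$ of $V_\gamma$, i.e.\ when $V_{\gamma'}$ meets the closed neighbourhood $N:=V_\gamma\cup\partial_G V_\gamma$. Hence the left-hand side of the polymer mixing condition is at most $\sum_{u\in N}\sum_{\gamma'\in\calC_G:\,u\in V_{\gamma'}}|E_{\gamma'}|\,w_G(\gamma')$, and it remains to (i) bound the inner sum for a fixed vertex $u$, and (ii) bound $|N|$ in terms of $|E_\gamma|$.

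For step (i) I would group the polymers $\gamma'$ containing $u$ by their total degree $\ell=\deg_G(V_{\gamma'})\ge1$. For a fixed $\ell$, Lemma~\ref{lem:npolymers} bounds the number of connected vertex sets $S\ni u$ with $\deg_G(S)=\ell$ by $(2\emm)^{2\ell-1}$, and each such $S$ carries at most $(q-1)^{|S|}\le(q-1)^{\ell}$ admissible spin assignments (using $|S|\le\deg_G(S)=\ell$, valid since $S$ is connected with $\ell\ge1$ so every vertex of $S$ has positive degree, and $q\ge2$). Combining this with $|E_{\gamma'}|\le\deg_G(V_{\gamma'})=\ell$ and the polymer sampling bound $w_G(\gamma')\le \emm^{-\tau\ell}$, the inner sum is at most $\tfrac{1}{2\emm}\sum_{\ell\ge1}\ell\,x^{\ell}=\tfrac{1}{2\emm}\cdot\tfrac{x}{(1-x)^2}$, where $x:=4\emm^2(q-1)\emm^{-\tau}$. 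The hypothesis $\tau\ge3\log(8\emm^3(q-1))$ forces $\emm^{-\tau}\le(8\emm^3(q-1))^{-3}$, hence $x\le \tfrac1{128\,\emm^{7}(q-1)^{2}}$, which is well below $1$; in particular the series converges and $\tfrac{x}{(1-x)^2}\le1$.

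For step (ii) I would note that, since $|E_\gamma|\ge1$, every vertex of $N$ is an endpoint of at least one edge of $E_\gamma$: a vertex of $V_\gamma$ has positive degree and each of its incident edges lies in $E_\gamma$, while a vertex of $\partial_G V_\gamma$ is joined by an edge to $V_\gamma$ and that edge lies in $E_\gamma$. Double-counting the (at most two) endpoints of each edge of $E_\gamma$ gives $|N|\le2|E_\gamma|$. Combining the two steps, $\sum_{\gamma'\nsim\gamma}|E_{\gamma'}|\,w_G(\gamma')\le|N|\cdot\tfrac1{2\emm}\cdot\tfrac{x}{(1-x)^2}\le2|E_\gamma|\cdot\tfrac1{2\emm}\cdot\tfrac{x}{(1-x)^2}\le|E_\gamma|/\emm$, which is exactly the polymer mixing condition with $\theta=1/\emm$.

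I do not expect a serious obstacle: this is essentially the vertex-based argument of~\cite{chen2019fast} transcribed to edges and total degrees. The one place that genuinely uses the new set-up, and that requires care, is keeping the three distinct quantities straight — the number of edges of a polymer, its number of vertices, and its total degree — so that the exponential-in-total-degree count of Lemma~\ref{lem:npolymers} is paired against the exponential-in-total-degree decay of the polymer sampling condition (which is exactly what the constant $3\log(8\emm^3(q-1))$ is tuned for, with considerable room to spare), and so that $|E_{\gamma'}|\le\deg_G(V_{\gamma'})$ and $|V_\gamma\cup\partial_G V_\gamma|\le2|E_\gamma|$ are used in place of the vertex-based estimates.
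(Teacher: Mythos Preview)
Your proposal is correct and follows essentially the same approach as the paper: reduce to a per-vertex estimate over $V_\gamma\cup\partial_G V_\gamma$, group polymers through a fixed vertex by total degree, and apply Lemma~\ref{lem:npolymers} together with the polymer sampling bound. The only cosmetic differences are that the paper organises the outer sum over edges of $E_\gamma$ (implicitly giving $|N|\le 2|E_\gamma|$) rather than vertices of $N$, and uses the cruder bound $|E_{\gamma'}|\le \emm^{\deg_G(V_{\gamma'})}$ in place of your $|E_{\gamma'}|\le\deg_G(V_{\gamma'})$, leading to a geometric series rather than your $\sum_\ell \ell x^\ell$; both close comfortably under the stated lower bound on~$\tau$.
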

\begin{proof}
Let $G \in \calG$ and let $\gamma \in \calC_G$ be an arbitrary polymer. We have that
\[
\sum_{\gamma' \nsim \gamma} |E_{\gamma'}| \cdot w_G(\gamma') \leq \sum_{\{ u, v\} \in E_\gamma} \left( \sum_{\gamma' : u \in V_{\gamma'}} |E_{\gamma'}| \cdot w_G(\gamma') + \sum_{\gamma' : v \in V_{\gamma'}} |E_{\gamma'}| \cdot w_G(\gamma') \right),
\]
therefore the result will follow if we are able to show for all $v \in V_\gamma \cup \partial_G V_\gamma$ that
\[
\sum_{\gamma' : v \in V_{\gamma'}} |E_{\gamma'}| \cdot w_G(\gamma') \leq \frac{1}{2\emm}.
\]
We can re-write the left-hand side of the above as follows
\[
\sum_{\gamma' : v \in V_{\gamma'}} |E_{\gamma'}| \cdot w_G(\gamma') \leq \sum_{\ell \geq 1} \sum_{\substack{\gamma' :~v \in V_{\gamma'}, \\ \deg_G(V_{\gamma'}) = \ell}} |E_{\gamma'}| \cdot w_G(\gamma').
\]
By Lemma~\ref{lem:npolymers}, we know that there are at most $(2\emm)^{2\ell - 1}$ connected vertex subsets $S$ such that $v \in S$, and $\deg_G(S) = \ell$. For each such $S$ there are $(q-1)^{|S|} \leq (q-1)^\ell$ ways to assign spins to its vertices, therefore there are at most $(2\emm)^{2\ell - 1}(q-1)^\ell$ polymers $\gamma'$ such that $v \in V_{\gamma'}$ and $\deg_G(V_{\gamma'}) = \ell$. Furthermore, since $\calF_\calG$ satisfies the polymer sampling condition, we know that each of these polymers satisfies $w_G(\gamma') \leq \emm^{-\tau \ell}$ where $\tau \geq 3\log(8 \emm^3 (q-1))$. Combining these facts with the observation that $|E_{\gamma'}| \leq 
\deg_G(V_{\gamma'}) \leq
\emm^{\deg_G(V_{\gamma'})}$ for all $\gamma' \in \calC_G$, we obtain that
\[
\sum_{\gamma' : v \in V_{\gamma'}} |E_{\gamma'}| \cdot w_G(\gamma') 
\leq \sum_{\ell \geq 1} (2\emm)^{2\ell-1} (q-1)^\ell \emm^\ell \emm^{-\tau \ell} \leq \frac{1}{2\emm} \sum_{\ell \geq 1} \left( \frac{4\emm^3 (q-1)}{\emm^{\tau}} \right)^\ell.
\]
Finally, since $\tau \geq \log(8 \emm^3 (q-1))$, the result follows. 
\end{proof} 

\subsection{Polymer dynamics}

We will define a Markov chain (the polymer dynamics) to sample from $\mu_G$, that mixes rapidly if the polymer mixing condition is met. This Markov chain is essentially the same as the polymer dynamics of~\cite{chen2019fast}, except that we sample edges instead of vertices. For $e \in E_G$ let $\calA(e)$ be the set of polymers $\gamma$ such that $e \in E_\gamma$. Let $\nu_e$ be the probability distribution on $\calA(e) \cup \emptyset$ defined by
\[
\nu_e(\gamma) = w_G(\gamma) \quad \text{ and} \qquad \nu_e(\emptyset) = 1 - \sum_{\gamma \in \calA(e)} \nu_e(\gamma).
\]
If the polymer mixing condition is satisfied, then $\nu_e$ gives a well-defined probability distribution. To see this, let $\gamma \in \calA(e)$ be such that $|E_{\gamma}|$ is least. Applying the polymer mixing condition, we obtain that
\[
\sum_{\substack{\gamma' \nsim \gamma \\ \gamma' \in \calA(e)}} |E_{\gamma'}| \cdot w_{G}(\gamma') \leq \sum_{\gamma' \nsim \gamma} |E_{\gamma'}| \cdot w_G(\gamma') < |E_\gamma|.
\]
Thus,
\[
\sum_{\gamma '\in \calA(e)} \nu_e(\gamma') = \sum_{\substack{\gamma' \nsim \gamma \\ \gamma' \in \calA(e)}} w_G(\gamma') \leq \sum_{\substack{\gamma' \nsim \gamma \\ \gamma' \in \calA(e)}} \frac{|E_{\gamma'}|}{|E_\gamma|} \cdot w_G(\gamma') \leq 1.
\]

Let $t \geq 0$ be an integer. The polymer dynamics Markov chain transitions from $\Gamma_t \in \Omega_G$ to $\Gamma_{t+1} \in \Omega_G$ according to the following rules.

\begin{enumerate}
\item Choose $e \in E_G$ uniformly at random. Let $\gamma_e$ be the (unique) polymer in $\Gamma_t \cap \calA(e)$, if it exists; otherwise, let $\gamma_e = \emptyset$.
\item Mutually exclusively
\begin{enumerate}
\item with probability $1/2$, let $\Gamma_{t+1} = \Gamma_t \setminus \{ \gamma_e \}$, and
\item with probability $1/2$, sample a polymer $\gamma$ from $\nu_e$, and let $\Gamma_{t+1} = \Gamma_t \cup \{ \gamma \}$ if this is compatible; otherwise, let $\Gamma_{t+1} = \Gamma_t$.
\end{enumerate}
\end{enumerate}

The following lemma shows that the unique stationary distribution of the polymer dynamics is the Gibbs distribution of the polymer model.

\begin{lemma}
\label{lem:stdist}
Let $G = (V_G, E_G)$ be a graph and let $(\calC_G, w_G)$ be a polymer model. The unique stationary distribution of the polymer dynamics is $\mu_G$.
\begin{proof}
One update of the polymer dynamics changes the polymer configuration by at most one polymer. Let $\Gamma' = \Gamma \cup \{ \gamma \}$, then
\[
\frac{\mu_G(\Gamma')}{\mu_G(\Gamma)} = w_G(\gamma) = \frac{w_G(\gamma)|E_\gamma|/2|E_G|}{|E_\gamma|/2|E_G|} = \frac{\Pr(\Gamma \rightarrow \Gamma')}{\Pr(\Gamma' \rightarrow \Gamma)}.
\]
It therefore follows by detailed balance that $\mu_G$ is a stationary distribution of the polymer dynamics.

Finally, the polymer dynamics are irreducible since we can move from any $\Gamma \in \Omega_G$ to any $\Gamma' \in \Omega_G$ by adding and removing polymers (for example, via the empty set). The polymer dynamics are also aperiodic since there are self-loops. 
\end{proof}
\end{lemma}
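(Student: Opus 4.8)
The plan is to prove both halves of the statement separately: first that $\mu_G$ is reversible (hence stationary) for the polymer dynamics, and then that the chain is irreducible and aperiodic on the finite state space $\Omega_G$, so that its stationary distribution is unique and therefore equals $\mu_G$. Throughout I would keep in mind that the transition rule in step~2(b) only makes sense once $\nu_e$ is a genuine probability distribution on $\calA(e)\cup\{\emptyset\}$, which is exactly the point established just before the lemma using the polymer mixing condition (Definition~\ref{def:pmix}); I would flag this as a standing assumption rather than belabour it.

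For reversibility, I would verify the detailed balance equation $\mu_G(\Gamma)\Pr(\Gamma\to\Gamma')=\mu_G(\Gamma')\Pr(\Gamma'\to\Gamma)$ for all $\Gamma,\Gamma'\in\Omega_G$. A single step changes the polymer configuration by at most one polymer, so both sides vanish unless $\Gamma$ and $\Gamma'$ differ by exactly one polymer; say $\Gamma'=\Gamma\cup\{\gamma\}$ with $\gamma\notin\Gamma$ and $\Gamma'$ compatible. The crucial observation is that, for any edge $e\in E_\gamma$, no polymer of $\Gamma$ can contain $e$ in its edge set: such a polymer would share a vertex with $\gamma$ or lie at graph distance~$1$ from it, contradicting the compatibility of $\Gamma'$. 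Hence in step~1 we have $\gamma_e=\emptyset$, so the only way to realise $\Gamma\to\Gamma'$ is to pick some $e\in E_\gamma$ (probability $1/|E_G|$ each), take branch~2(b) (probability $1/2$), and draw $\gamma$ from $\nu_e$ (probability $\nu_e(\gamma)=w_G(\gamma)$), whereupon $\gamma$ is compatible and is added; this gives $\Pr(\Gamma\to\Gamma')=\tfrac{|E_\gamma|}{2|E_G|}w_G(\gamma)$. Conversely, $\Gamma'\to\Gamma$ requires picking some $e\in E_\gamma$, for which now $\gamma_e=\gamma$ is the unique polymer of $\Gamma'$ containing $e$, and taking branch~2(a); this gives $\Pr(\Gamma'\to\Gamma)=\tfrac{|E_\gamma|}{2|E_G|}$. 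Since $\mu_G(\Gamma')/\mu_G(\Gamma)=w_G(\gamma)$, detailed balance follows, so $\mu_G$ is stationary. (This is the computation sketched in the excerpt; the new content relative to the sketch is the compatibility observation that pins down the transition probabilities.)

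For uniqueness, note that $\Omega_G$ is finite. Aperiodicity is immediate: from any state the chain has a positive-probability self-loop (e.g.\ choosing an edge~$e$ with $\gamma_e=\emptyset$ and taking branch~2(a) leaves the configuration unchanged). For irreducibility I would show every state communicates with $\emptyset$: from any $\Gamma$, repeatedly picking an edge of a polymer $\gamma\in\Gamma$ and taking branch~2(a) deletes $\gamma$ with positive probability, so $\Gamma$ reaches $\emptyset$; and from $\emptyset$ one can build up any target configuration one polymer at a time, each addition of a polymer $\gamma$ having positive probability because some $e\in E_\gamma$ is free and $\nu_e(\gamma)=w_G(\gamma)>0$ — here one may assume every allowed polymer has positive weight, since polymers of weight zero can be discarded from $\calC_G$ without affecting $Z_G$ or $\mu_G$. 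A finite, irreducible, aperiodic Markov chain has a unique stationary distribution, and by the previous paragraph it is $\mu_G$. I do not expect a serious obstacle here; the only step needing genuine care is the compatibility argument identifying the forward and backward transition probabilities in the detailed-balance computation, together with the routine bookkeeping that the factor $|E_\gamma|/(2|E_G|)$ matches on both sides.
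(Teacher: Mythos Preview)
Your proposal is correct and follows essentially the same approach as the paper: verify detailed balance for configurations differing by one polymer, then argue irreducibility via the empty configuration and aperiodicity via self-loops. Your added compatibility observation (that no polymer of $\Gamma$ can have $e\in E_\gamma$ in its edge set) is exactly what justifies the transition probabilities the paper writes down without comment, so your version is simply a more careful rendering of the same argument.
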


The following lemma shows that if the polymer mixing condition is satisfied, then the polymer dynamics Markov chain mixes rapidly.\footnote{Recall that for an ergodic Markov chain  with a finite state space $\Omega$ and transition matrix~$P$,
the mixing time to its stationary distribution $\mu$ is defined as $\max_{\omega\in \Omega}\min \{ t > 0 : \| P^t(\omega,\cdot) - \mu  \|_{TV} \leq \epsilon \}$, where $\| \pi - \pi' \|_{TV}$ is the total variation distance between the probability distributions $\pi$ and $\pi'$.}

\begin{lemma}
\label{lem:mixing}
Let $\calG$ be a class of graphs and let $\{ (\calC_G, w_G) \mid G \in \calG \}$ be a family of polymer models that satisfies the polymer mixing condition (see Definition~\ref{def:pmix}). For all $G \in \calG$ and all $\epsilon > 0$, the mixing time of the polymer dynamics is $\tmix(\epsilon) = O(|E_G| \log (|E_G|/\epsilon))$.
\begin{proof}
Let $G \in \calG$. We proceed by path coupling. Let $D(\cdot, \cdot)$ be a metric on $\Omega_G$ which we define by setting $D(\Gamma, \Gamma') = 1$ for polymer configurations $\Gamma, \Gamma' \in \Omega_G$ such that $\Gamma' = \Gamma \cup \{ \gamma \}$ for some polymer $\gamma \in \calC_G$. This can be extended to a shortest path metric on $\Omega_G$ (the symmetric difference).

Let $(X_t, Y_t)$ be a coupling for the polymer dynamics where both chains make the same random choices. Suppose that $D(X_t, Y_t) = 1$ and that $X_t = Y_t \cup \{ \gamma \}$ for some $\gamma \in \calC_G$. We obtain $(X_{t+1}, Y_{t+1})$ from $(X_t, Y_t)$ by performing an update according to the polymer dynamics, and consider the various possibilities for $D(X_{t+1}, Y_{t+1})$. With probability $|E_\gamma|/2|E_G|$ we choose $\gamma$ and remove it from both $X_t$ and $Y_t$ -- in this case $D(X_{t+1}, Y_{t+1}) = 0$. Alternatively, if we add a polymer $\gamma'$ such that $\gamma' \nsim \gamma$, then the update will be rejected in one chain and performed in the other and $D(X_{t+1}, Y_{t+1}) \leq 2$ -- this occurs with probability $|E_{\gamma'}|/2|E_G| \cdot w_G(\gamma')$. Combining these two cases, we obtain that
\[
\E[D(X_{t+1}, Y_{t+1}) \mid (X_t, Y_t)] \leq 1 + \frac{1}{2|E_G|} \left( - |E_\gamma| + \sum_{\gamma' \nsim \gamma} |E_{\gamma'}| \cdot w_G(\gamma') \right),
\]
and therefore by the polymer mixing condition that
\[
\E[D(X_{t+1}, Y_{t+1}) \mid (X_t, Y_t)] \leq 1 - \frac{|E_\gamma| (1 - \theta)}{2 |E_G|} \leq 1 - \frac{(1 - \theta)}{2 |E_G|},
\]
for some constant $\theta \in (0, 1)$. Let $W$ be the diameter of $\Omega_G$ with respect to the metric $D(\cdot, \cdot)$ and note that $W \leq 2 n$. By the path coupling lemma~\cite[Section 6]{dyer1999random}, it follows that the mixing time is at most $\log(W/\epsilon)2|E_G|/(1 - \theta) = O(|E_G| \log(|E_G|/\epsilon))$.
\end{proof}
\end{lemma}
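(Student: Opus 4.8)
The plan is a path-coupling argument in the style of Bubley and Dyer~\cite{dyer1999random}. First I would equip $\Omega_G$ with the metric $D$ in which $D(\Gamma,\Gamma')=1$ exactly when $\Gamma$ and $\Gamma'$ differ by the addition (or removal) of a single polymer, extended to the induced shortest-path metric; concretely this is the symmetric-difference distance between configurations. Since any $\Gamma \in \Omega_G$ is joined to $\emptyset$ by removing its polymers one at a time, and distinct polymers sit on disjoint nonempty vertex sets, the $D$-diameter of $\Omega_G$ is $W \le 2|V_G|$, so $\log(W/\epsilon) = O(\log(|E_G|/\epsilon))$ for the host graphs we work with. (We have already checked, just before the lemma, that the distributions $\nu_e$ used in one step of the chain are well defined once the polymer mixing condition holds.) By the path-coupling lemma it then suffices to exhibit, for each pair of $D$-adjacent configurations, a one-step coupling of the dynamics whose expected distance contracts.

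So I would take $X_t = Y_t \cup \{\gamma\}$ with $\gamma \in \calC_G$ and couple the two copies of the chain by feeding them identical randomness: the same edge $e \in E_G$, the same add/remove coin, and the same polymer drawn from $\nu_e$. The case analysis is short. If $e \in E_\gamma$ and the removal branch is taken, then $\gamma$ is deleted from $X_t$ while nothing is deleted from $Y_t$ (every polymer of $Y_t \subseteq X_t$ is compatible with $\gamma$ and hence avoids the edges of $E_\gamma$), so $D$ drops to $0$; this happens with probability $|E_\gamma|/(2|E_G|)$. If the addition branch is taken and the sampled polymer $\gamma'$ is incompatible with $\gamma$, then the addition is rejected in $X_t$ but (in the generic situation where $\gamma'$ is compatible with $Y_t$) accepted in $Y_t$, so the symmetric difference becomes $\{\gamma,\gamma'\}$ and $D$ rises to $2$; summing $\nu_e(\gamma') = w_G(\gamma')$ over $e \in E_{\gamma'}$ and then over all $\gamma' \nsim \gamma$, the probability of such an event is at most $\tfrac{1}{2|E_G|}\sum_{\gamma' \nsim \gamma}|E_{\gamma'}|\,w_G(\gamma')$. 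In every remaining case the nesting $Y_t \subseteq X_t$ forces the two chains to make the same update, so $D$ stays at $1$. Putting the cases together yields
\[
\E\big[D(X_{t+1},Y_{t+1}) \mid X_t, Y_t\big] \ \le\ 1 - \frac{|E_\gamma|}{2|E_G|} + \frac{1}{2|E_G|}\sum_{\gamma' \nsim \gamma}|E_{\gamma'}|\,w_G(\gamma'),
\]
and the polymer mixing condition bounds the last sum by $\theta|E_\gamma|$, so the expected distance contracts by a factor of at most $1 - \tfrac{|E_\gamma|(1-\theta)}{2|E_G|} \le 1 - \tfrac{1-\theta}{2|E_G|}$, using $|E_\gamma| \ge 1$.

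Finally I would feed this contraction factor into the path-coupling theorem together with the bound $W \le 2|V_G|$ on the diameter: the mixing time is at most $\tfrac{2|E_G|}{1-\theta}\log(W/\epsilon)$, which is $O(|E_G|\log(|E_G|/\epsilon))$ since $\theta \in (0,1)$ is an absolute constant fixed by Definition~\ref{def:pmix}.

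The step I expect to be the main obstacle is the coupled case analysis: arguing cleanly that $D$ can increase only through the ``incompatible addition'' event, that when it does it increases by exactly one, and that the probability of that event telescopes into $\tfrac{1}{2|E_G|}\sum_{\gamma' \nsim \gamma}|E_{\gamma'}|\,w_G(\gamma')$ so that the polymer mixing condition applies verbatim. All three points rest on the two structural facts that a single step alters the configuration by at most one polymer and that $Y_t \subseteq X_t$; granting those, the diameter estimate and the invocation of path coupling are routine.
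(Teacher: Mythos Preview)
Your proposal is correct and follows essentially the same approach as the paper: the same symmetric-difference metric, the same identity coupling, the same two-case analysis (removal of $\gamma$ versus addition of an incompatible $\gamma'$), the same contraction bound $1-\tfrac{(1-\theta)}{2|E_G|}$, and the same appeal to the path-coupling lemma with diameter $W\le 2|V_G|$. Your write-up is in fact slightly more careful in spelling out why $Y_t\subseteq X_t$ forces identical moves in the remaining cases and why the removal branch does nothing in $Y_t$ when $e\in E_\gamma$.
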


\subsection{Single polymer sampler}

We now show how to use the polymer dynamics as part of an efficient algorithm for computing an approximate sample from $\mu_G$. The input to this algorithm will be a graph and an accuracy parameter, and we will assume that we are given the graph both as an adjacency matrix and an array of linked adjacency lists, where each 
adjacency list contains 
all of the neighbours of a vertex. These data structures will allow us to decide whether two given vertices are adjacent in constant time (in the unit cost model), 
and will also allow us to access the next entry in  any adjacency list   in constant time. Other choices of data structures will increase the running time of our algorithms by at most a polynomial factor.

Fix $e \in E_G$. We now show how to sample a polymer from $\nu_e$ in expected constant time. Let $r = \tau - \log(12\emm^2(q-1)) > 0$, where $\tau$ is as in the polymer sampling condition (Definition~\ref{def:psample}). For $\ell \geq 0$ let $\calA_\ell(e) = \{ \gamma \in \calA(e) : \deg_G(V_\gamma) \leq \ell \}$. We use the following algorithm to to sample a polymer from $\nu_e$.

\begin{center}
\begin{algorithm}[h]
\caption{Sampling a polymer from $\nu_e$.}
\label{alg:pmersampler}
\begin{enumerate}[leftmargin=\parindent, labelsep=5pt]
\item Sample $\bell$ from the following distribution: $\Pr(\bell = \ell) = (1 - \emm^{-r}) \emm^{-r\ell}$ for all integers $\ell \geq 1$. This is so that $\Pr(\bell \geq \ell) = \emm^{-r \ell}$.
\item Enumerate $\calA_{\bell}(e)$ and compute $w_G(\gamma)$ for each $\gamma \in \calA_{\bell}(e)$.
\item Mutually exclusively, output each $\gamma \in \calA_{\bell}(e)$ with probability $w_G(\gamma) \cdot \emm^{r \deg_G(\gamma)}$. With all remaining probability, output $\emptyset$.
\end{enumerate}
\end{algorithm} 
\end{center}

In order to perform the second step of the above algorithm, we must be able to enumerate all connected vertex subsets $S \subseteq V_G$ with $\deg_G(S) \leq \bell$, that contain an endpoint of $e$. In order to do this efficiently, we require the following result which is an adaptation of Lemma 3.4 of~\cite{patel2017deterministic}.

\begin{lemma}
\label{lem:enumsgraphs}
Let $G = (V_G, E_G)$ be a graph, let $v \in V_G$, and let $\ell \geq 1$. There is an algorithm which enumerates all connected vertex subsets $S \subseteq V_G$ such that $v \in S$ and $\deg_G(S) \leq \ell$. The running time of this algorithm is $O(\ell^7 (2\emm)^{4 \ell})$.
\begin{proof}
Let $C(G, v, \ell)$ be the set of all connected vertex subsets $S \subseteq V_G$ such that $v \in S$ and $\deg_G(S) \leq \ell$. For $\ell < \deg_G(v)$, we simply output $\emptyset$. For $\ell \geq \deg_G(v)$, we construct $C(G, v, \ell)$ recursively. For the base case, observe that $C(G, v, \deg_G(v))$ is $\{ v \}$. For $\ell' \geq \deg_G(v)$, in order to construct $C(G, v, \ell'+1)$ given $C(G, v, \ell')$, we first construct the multiset
\[
C''(G, v, \ell' + 1) = \{ S \cup \{ u \} : S \in C(G, v, \ell'),~u \in \partial_G S,~\deg_G(S \cup \{u\}) = \ell' + 1 \},
\]
then remove the repeat elements from it to obtain $C'(G, v, \ell' + 1)$, and finally set $C(G, v, \ell' + 1) = C'(G, v, \ell' + 1) \cup C(G, v, \ell')$.

To construct $C''(G, v, \ell' + 1)$, we consider each element $S \in C(G, v, \ell')$ and each vertex $u \in \partial_G S$. Iterating through $C(G, v, \ell')$ requires $O(\ell' (2\emm)^{2\ell' - 1})$ time, by Lemma~\ref{lem:npolymers}. For each 
$S \in C(G, v, \ell')$, we can iterate through $\partial_G S$ in $O(\ell')$ time, given that we have constant-time access to the elements of the adjacency list of each vertex in $S$. The total time required is therefore $O(\ell'^2 (2\emm)^{2\ell' - 1})$. We then remove the repeat elements from $C''(G, v, \ell' + 1)$ by pairwise comparison, which takes $O((\ell' + 1)^2 \cdot |C''(G, v, \ell' + 1)|^2) = O(\ell'^6 (2\emm)^{4 \ell'})$ time.

To construct $C(G, v, \ell)$, we begin with the base case and then perform the above $\ell - \deg_G(v)$ times. The total running time is therefore $O(\ell^7 (2\emm)^{4 \ell})$. We now show that the algorithm returns every element of $C(G, v, \ell)$. For the base case, it is clearly true. Furthermore, for all $\ell' > \deg_G(v)$ and all $S \in C(G, v, \ell')$, we know that $S$ consists of $T \in C(G, v, \ell'')$ and $u \in \partial_G T$, for some $\ell'' < \ell'$.
\end{proof}
\end{lemma}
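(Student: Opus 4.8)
The plan is to adapt the enumeration procedure of Patel and Regts (\cite[Lemma 3.4]{patel2017deterministic}), which enumerates connected vertex subsets of bounded \emph{size}, so that it instead enumerates connected subsets of bounded \emph{total degree}. Write $C(G,v,k)$ for the set of connected $S\subseteq V_G$ with $v\in S$ and $\deg_G(S)\leq k$. If $\ell<\deg_G(v)$ then $C(G,v,\ell)=\emptyset$ and the algorithm outputs nothing, so assume $\ell\geq\deg_G(v)$. I would compute $C(G,v,\ell)$ incrementally, producing $C(G,v,k)$ for $k=\deg_G(v),\deg_G(v)+1,\ldots,\ell$ in turn. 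The base case is $C(G,v,\deg_G(v))=\{\{v\}\}$. Given $C(G,v,k)$, form the multiset of all sets $S\cup\{u\}$ with $S\in C(G,v,k)$ and $u\in\partial_G S$ such that $\deg_G(S\cup\{u\})=k+1$; remove duplicates from it; and take the union with $C(G,v,k)$ to obtain $C(G,v,k+1)$.

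For correctness, the key observation is that every connected $S$ with $|S|\geq 2$ arises from a strictly ``smaller'' configuration in this sense: fixing a spanning tree of $G[S]$ and deleting one of its leaves $u\neq v$ (a tree on at least two vertices has at least two leaves), the set $T:=S\setminus\{u\}$ is connected, $u\in\partial_G T$, and $\deg_G(T)=\deg_G(S)-\deg_G(u)<\deg_G(S)$ since $\deg_G(u)\geq 1$ (as $u$ has an edge into $T$). Iterating this, every $S\in C(G,v,\ell)$ has a build-up sequence from $\{v\}$ through connected sets of strictly increasing total degree, so an induction on $k$ shows that the algorithm lists exactly $C(G,v,\ell)$ with no omissions; the deduplication step ensures that the intermediate lists are repeat-free.

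For the running time, I would use Lemma~\ref{lem:npolymers} to bound $|C(G,v,k)|\leq\sum_{j\leq k}(2\emm)^{2j-1}=O\big(k\,(2\emm)^{2k-1}\big)$. Iterating over $C(G,v,k)$ and, for each $S$ in it, over $\partial_G S$ (which has at most $\deg_G(S)\leq k$ elements and can be traversed in $O(k)$ time using the adjacency-list representation assumed just before the lemma) produces a multiset of size $O\big(k^2(2\emm)^{2k-1}\big)$; removing repeats by pairwise comparison, where each comparison of two sets of size $O(k)$ costs $O(k^2)$ time, costs $O\big(k^6(2\emm)^{4k}\big)$ and dominates the cost of one round. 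Running this for the at most $\ell$ values of $k$ gives the claimed bound $O\big(\ell^7(2\emm)^{4\ell}\big)$ after absorbing the polynomial factors.

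The main obstacle I anticipate is not the algorithmic idea --- which is essentially the same bottom-up recursion as in \cite{patel2017deterministic} --- but rather (i) stating and verifying the build-up lemma cleanly, namely that deleting a leaf of a spanning tree of $G[S]$ keeps the remainder connected and strictly decreases its total degree, so that the induction on total degree is both valid and exhaustive; and (ii) tracking the polynomial factors in the running time carefully enough to land on the stated exponent, which requires being precise about how duplicates are detected and how the boundary $\partial_G S$ is enumerated within the unit-cost adjacency-list model. A minor point to handle separately is the degenerate case $\ell<\deg_G(v)$, where the output is empty.
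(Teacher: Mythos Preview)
Your proposal is correct and follows essentially the same approach as the paper: the same incremental construction of $C(G,v,k)$ from $C(G,v,\deg_G(v))=\{\{v\}\}$ by adding boundary vertices, the same use of Lemma~\ref{lem:npolymers} to bound the list sizes, and the same pairwise-comparison deduplication yielding the $O(\ell^7(2\emm)^{4\ell})$ bound. Your correctness argument via removing a spanning-tree leaf is slightly more explicit than the paper's one-line justification, but the content is the same.
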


We can now show the correctness and expected constant running time of Algorithm~\ref{alg:pmersampler}. 

\begin{lemma}
\label{lem:spsampler}
Let $q \geq 2$ be an integer,  $\calG$ be a class of graphs, and  $\{ (\calC_G, w_G) \mid G \in \calG \}$ be a family of computationally feasible $q$-spin polymer models that satisfies the polymer sampling condition (see Definition~\ref{def:psample}). For all $G \in \calG$ and all $e \in E_G$, Algorithm~\ref{alg:pmersampler} samples a polymer from $\nu_e$ in expected constant time.
\begin{proof}
We begin by showing that step 3 of the algorithm is well-defined, by showing that
$\sum_{\gamma \in \calA(e)} w_G(\gamma) \cdot \emm^{r \deg_G(V_\gamma)} \leq 1$ for all $e \in E_G$. For all $v \in V_G$, we have that
\[
\sum_{\gamma : u \in V_{\gamma}} w_G(\gamma) \cdot \emm^{r \deg_G(V_\gamma)} \leq \sum_{\ell \geq 1} \sum_{\substack{\gamma : u \in V_{\gamma}, \\ \deg_G(V_\gamma) = \ell}} w_G(\gamma) \cdot \emm^{r \ell}\leq \sum_{\ell \geq 1} (4\emm^2)^\ell (q-1)^\ell \emm^{- \tau \ell} \emm^{r \ell} \leq \frac{1}{2},
\]
where the second-to-last inequality follows from Lemma~\ref{lem:npolymers} and the final inequality follows from the fact that $4\emm^2(q-1)\emm^{- \tau + r} \leq 1/3$.

We now show that the expected running time of the algorithm is constant. For $\ell \geq 1$, the time taken to enumerate $\calA_\ell(e)$ is $O((q-1)^\ell \ell^7 (2\emm)^{4 \ell})$, by Lemma~\ref{lem:enumsgraphs}. The time taken to then iterate through $\calA_\ell(e)$ and compute $w_G(\gamma)$ for each $\gamma \in \calA_\ell(e)$ is $O((q-1)^\ell \ell (2\emm)^{2 \ell - 1} \emm^\ell)$, by Lemma~\ref{lem:npolymers} and the fact that the family of polymer models is computationally feasible. The expected running time of Algorithm~\ref{alg:pmersampler} is therefore 
\begin{align*}
&= O \left( \sum_{\ell \geq 1} \Pr(\bell = \ell) \left( (q-1)^\ell \ell^7 (2\emm)^{4 \ell} +  (q-1)^\ell \ell (2\emm)^{2 \ell - 1} \emm^\ell \right) \right) \\
&= O \left( \sum_{\ell \geq 1} \ell^7 \left( \frac{16\emm^4(q-1)}{\emm^r} \right)^\ell \right).
\end{align*}
Since $r = \tau - \log(12\emm^2(q-1)) \geq 3\log(8\emm^3(q-1)) - \log(12\emm^2(q-1)) \geq 2 \log(8 \emm^3 (q-1))$, it follows that $16\emm^4(q-1)/\emm^r < 1/2$, and therefore that the expected running time of Algorithm~\ref{alg:pmersampler} is $O(1)$. 

Finally, let $\gamma \in \calA(e)$ be a polymer. In order for the algorithm to sample $\gamma$, it must first  sample $\bell \geq \deg_G(V_\gamma)$ in step 1 of the algorithm, then conditioned on this choice, it must output $\gamma$ in step 3. This occurs with probability $\emm^{-r \deg_G(V_\gamma)} w_G(\gamma) \cdot \emm^{r \deg_G(V_\gamma)} = w_G(\gamma)$, and therefore the output distribution is $\nu_e$.
\end{proof}
\end{lemma}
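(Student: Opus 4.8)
The plan is to establish three things about Algorithm~\ref{alg:pmersampler}: that Step~3 defines a genuine probability distribution, that the expected running time is $O(1)$, and that the output distribution is exactly $\nu_e$.

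First I would check that Step~3 is well-defined, i.e.\ that $\sum_{\gamma \in \calA(e)} w_G(\gamma)\, \emm^{r \deg_G(V_\gamma)} \le 1$. Since $e$ has two endpoints, it suffices to bound $\sum_{\gamma :\, u \in V_\gamma} w_G(\gamma)\,\emm^{r\deg_G(V_\gamma)}$ by $1/2$ for an arbitrary fixed vertex $u$. Grouping the polymers by their total degree $\ell = \deg_G(V_\gamma)$, Lemma~\ref{lem:npolymers} gives at most $(2\emm)^{2\ell-1}$ connected vertex subsets containing $u$ with total degree $\ell$, and each such subset admits at most $(q-1)^\ell$ spin assignments. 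Using the polymer sampling condition $w_G(\gamma) \le \emm^{-\tau\ell}$ and substituting $r = \tau - \log(12\emm^2(q-1))$, the sum collapses to a geometric series with ratio $4\emm^2(q-1)\emm^{-\tau+r} = 1/3 < 1$, hence is at most $1/2$, as needed.

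Next I would bound the expected running time by conditioning on the value $\bell = \ell$, which occurs with probability $(1-\emm^{-r})\emm^{-r\ell}$. Given $\bell = \ell$, Step~2 must enumerate $\calA_\ell(e)$: by Lemma~\ref{lem:enumsgraphs} the connected vertex subsets containing a fixed endpoint of $e$ with total degree at most $\ell$ can be listed in $O(\ell^7(2\emm)^{4\ell})$ time, and accounting for the $(q-1)^\ell$ spin assignments gives the enumeration cost; computing each weight costs $O(\emm^{\deg_G(V_\gamma)})$ by computational feasibility, and combining this with the count from Lemma~\ref{lem:npolymers} gives a per-$\ell$ cost of $O\big((q-1)^\ell\ell^7(2\emm)^{4\ell} + (q-1)^\ell\ell(2\emm)^{2\ell-1}\emm^\ell\big)$. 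Multiplying by $\Pr(\bell = \ell)$ and summing over $\ell \ge 1$ yields a series dominated by $\sum_{\ell \ge 1} \ell^7\big(16\emm^4(q-1)\emm^{-r}\big)^\ell$, which converges to a constant because $r \ge 2\log(8\emm^3(q-1))$ forces $16\emm^4(q-1)\emm^{-r} < 1/2$.

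Finally I would verify correctness. For a fixed polymer $\gamma \in \calA(e)$, the algorithm outputs $\gamma$ if and only if it first draws $\bell \ge \deg_G(V_\gamma)$, which has probability $\emm^{-r\deg_G(V_\gamma)}$ by the design of the distribution in Step~1, and then selects $\gamma$ in Step~3, which happens with conditional probability $w_G(\gamma)\emm^{r\deg_G(V_\gamma)}$; hence $\gamma$ is output with probability exactly $w_G(\gamma) = \nu_e(\gamma)$, and all leftover mass is assigned to $\emptyset$, matching $\nu_e(\emptyset)$. I expect the running-time analysis to be the main obstacle: one must carefully combine the enumeration bound of Lemma~\ref{lem:enumsgraphs}, the subset count of Lemma~\ref{lem:npolymers}, the $(q-1)^\ell$ spin factor, and the per-polymer weight cost, and then confirm that the choice of $r$ (and thus the lower bound on $\tau$ in Definition~\ref{def:psample}) is strong enough to make the series summable — note that well-definedness of Step~3 needs only $4\emm^2(q-1)\emm^{-\tau+r}<1$, whereas the running-time bound needs the stronger inequality $16\emm^4(q-1)\emm^{-r}<1/2$.
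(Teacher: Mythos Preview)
Your proposal is correct and follows essentially the same three-part structure as the paper: bounding the per-vertex sum by $1/2$ via Lemma~\ref{lem:npolymers} and the identity $4\emm^2(q-1)\emm^{-\tau+r}=1/3$ to justify Step~3, summing the per-$\ell$ costs from Lemmas~\ref{lem:enumsgraphs} and~\ref{lem:npolymers} against $\Pr(\bell=\ell)$ to get a convergent series (using $r \ge 2\log(8\emm^3(q-1))$), and multiplying $\Pr(\bell \ge \deg_G(V_\gamma)) = \emm^{-r\deg_G(V_\gamma)}$ by the Step~3 selection probability to recover $\nu_e$. Your closing observation distinguishing the inequality needed for well-definedness from the stronger one needed for the running-time bound is accurate and matches the paper's use of the constants.
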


We now combine the polymer dynamics with Algorithm~\ref{alg:pmersampler} to give an efficient algorithm for computing an approximate sample from the Gibbs distribution of a polymer model.

\pmersampling*
\begin{proof}
\sloppy We focus on the sampling algorithm, since it is shown  in~\cite[Section 3]{chen2019fast} how to convert this into the desired counting algorithm for the  polymer model (although the approach in~\cite{chen2019fast} was stated for bounded-degree graphs,  it carries over to graphs of unbounded degree).

By Lemma~\ref{lem:stdist}, the unique stationary distribution of the polymer dynamics is $\mu_G$, thus our sampling algorithm
is based on the polymer dynamics.
By Lemma~\ref{lem:spsampler}, there is an integer $C_1 > 1$ 
(independent of~$G$)
such that
the expected number of steps taken to execute a single update of the polymer dynamics is at most~$C_1$. 
Lemma~\ref{lem:samplingtomixing} shows that $\calF_\calG$ also satisfies the polymer mixing condition. 
This allows us to apply Lemma~\ref{lem:mixing}, 
which shows that there is an integer $C_2 > 1$ 
(also independent of~$G$)
such that 
the mixing time of the polymer dynamics satisfies
$\tmix(\epsilon/2) \leq C_2 m \log\frac{m}{\epsilon}$. 

We use the following algorithm to compute an $\epsilon$-sample from $\mu_G$. Repeat the following $\lceil \log(2/\epsilon)\rceil $ times, and if no polymer configuration is returned, return $\emptyset$. Run the polymer dynamics for $3 C_1 C_2 m \ceil{\log \tfrac{m}{\epsilon}}$ steps, starting from $\emptyset$, and if at least $C_2 m \lceil \log\tfrac{m}{\epsilon}\rceil$ updates of the polymer dynamics were executed, then return the configuration.

We claim that the distribution of the output configuration from the above algorithm is within total variation distance $\epsilon$ of $\mu_G$. This will follow once we have shown that the probability that no run of the polymer dynamics  returns a configuration, is at most $\epsilon/2$. This is because the configuration we would output if one is returned by a run of the polymer dynamics, is within total variation distance $\epsilon/2$ of $\mu_G$.

Consider one of the $\lceil \log(2/\epsilon)\rceil $ independent runs of the polymer dynamics that is made by the algorithm. Let the random variable $X$ denote the total number of steps required to execute $C_2 m \ceil{\log\frac{m}{\epsilon}}$ steps of the polymer dynamics. We have that $\E[X] \leq C_1 C_2 m \ceil{\log\tfrac{m}{\epsilon}}$, and therefore by Markov's inequality that $\Pr(X \geq 3 \E[X]) \leq 1/3<1/\emm$. Since the runs are independent, the probability that no run performs at least $C_2 m \ceil{\log\tfrac{m}{\epsilon}}$ updates of the polymer dynamics, is at most $(1/\emm)^{\log(2/\epsilon)} = \epsilon/2$.
\end{proof}

\section{Details of polymer application for the ferromagnetic Potts model}
\label{sec:pottsdetails}

It is known that if $G$ is an $\alpha$-expander
then most of the weight of $Z_{G, q, \beta}$ is contributed by colourings that colour more than half of the vertices of $G$ with a single colour. The following definitions apply to these colourings.

\begin{definition}
Let $G = (V_G, E_G)$ be a graph. Let $q$ be a positive integer, let $r \in [q]$, and let $\beta > 0$. We define $\ored$ to be set of $q$-colourings of $G$ that colour more than half of the vertices of $G$ with $r$. We also define $\zred = \sum_{\sigma \in \ored} \emm^{\beta m_G(\sigma)}$.
\end{definition}

The following result is due to Jenssen, Keevash, and Perkins~\cite{jenssen2020algorithms}. We note that although this result is applied to bounded-degree graphs in~\cite{jenssen2020algorithms}, it remains true for arbitrary $\alpha$-expanders.

\begin{lemma}\cite[Lemma 12]{jenssen2020algorithms}
\label{lem:gstates}
Let $\alpha > 0$ and let $G = (V_G, E_G)$ be an $\alpha$-expander. Let $q \geq 2$ be an integer, let $r \in [q]$ be any spin, and let $\beta > 2 \log(\emm q)/\alpha$ be a real number. We have that $q \cdot \zred$ is an $\emm^{-|V_G|}$-approximation to $Z_{G, q, \beta}$.
\end{lemma}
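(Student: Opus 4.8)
The goal is to show that $q \cdot \zred$ approximates $Z_{G,q,\beta}$ within a multiplicative factor $\emm^{\pm|V_G|}$; since the colourings counted by $\zred$ form a subset of all colourings, the lower bound $q\zred \le Z_{G,q,\beta}\cdot\emm^{|V_G|}$ is trivial (in fact $q\zred \le q Z_{G,q,\beta}$ suffices, but we want the sharper two-sided statement), so the real content is the lower bound: $Z_{G,q,\beta} \le q\, \zred\, \emm^{|V_G|}$, equivalently that the colourings which do \emph{not} put a strict majority on some single colour contribute only an $\emm^{|V_G|}$ fraction of the total weight. I would first partition $\Omega_{G,q}$ according to the colour $r$ (if any) assigned to a strict majority of vertices; the sets $\ored$ for distinct $r$ are disjoint, so $\sum_r \zred \le Z_{G,q,\beta}$, and the leftover is $Z^{\mathrm{bal}} := \sum_{\sigma\text{ no strict majority}} \emm^{\beta m_G(\sigma)}$. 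It then suffices to show $Z^{\mathrm{bal}} \le q\,\zred\,\emm^{|V_G|} - \sum_r\zred$, for which it is certainly enough to prove $Z^{\mathrm{bal}} \le \zred(\emm^{|V_G|}-1)$ for each fixed $r$, or even just $Z^{\mathrm{bal}} \le \zred$ times something like $\emm^{|V_G|}$.

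**Key steps.** First, lower bound $\zred$ by a single well-chosen term: the monochromatic colouring $\sigma\equiv r$ has $m_G(\sigma)=|E_G|$, so $\zred \ge \emm^{\beta|E_G|}$. Second, upper bound $Z^{\mathrm{bal}}$ by bounding, for each ``balanced'' colouring $\sigma$, the number of \emph{bichromatic} edges $b_G(\sigma) := |E_G| - m_G(\sigma)$ from below using expansion. The standard move: let $\sigma$ be any colouring and let $U_c = \sigma^{-1}(c)$; take $S$ to be the union of the colour classes, processed greedily, so that one can find a set $S$ with $|S|\le |V_G|/2$ that is a union of colour classes with $|S|$ as large as possible subject to $\le |V_G|/2$ — then $e_G(S,S^c)$ counts bichromatic edges and, by $\alpha$-expansion, $e_G(S,S^c)\ge \alpha|S|$. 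When $\sigma$ has no strict majority colour, one can arrange $|S| \ge |V_G|/(2q)$ or at least $|S|\ge$ some constant fraction times the second-largest class, giving $b_G(\sigma) \ge \alpha|V_G|/(2q)$ roughly; more carefully, iterating the expansion bound over the colour classes (the argument in \cite{jenssen2020algorithms}) yields $b_G(\sigma)\ge \alpha(|V_G|-\max_c|U_c|)$, which for balanced $\sigma$ is $\ge \alpha|V_G|/2$. Third, count: the number of colourings is at most $q^{|V_G|}$, so
\[
Z^{\mathrm{bal}} \le q^{|V_G|}\,\emm^{\beta(|E_G| - \alpha|V_G|/2)} = \emm^{\beta|E_G|}\cdot \big(q\,\emm^{-\alpha\beta/2}\big)^{|V_G|}.
\]
Since $\beta > 2\log(\emm q)/\alpha$ we have $q\emm^{-\alpha\beta/2} < \emm^{-1}$, so $Z^{\mathrm{bal}} \le \emm^{\beta|E_G|}\emm^{-|V_G|} \le \zred\,\emm^{-|V_G|}$, which is far stronger than needed. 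Finally, assemble: $Z_{G,q,\beta} = \sum_r \zred + Z^{\mathrm{bal}} \le \sum_r \zred + \emm^{-|V_G|}\zred$, and combined with $\zred \le Z_{G,q,\beta}$ and $\sum_r \zred \le Z_{G,q,\beta}$ this pins $q\zred$ between $\emm^{-|V_G|}Z_{G,q,\beta}$ and $\emm^{|V_G|}Z_{G,q,\beta}$ after a short symmetry/averaging argument over $r$ (all $\zred$ are equal by the automorphism of $[q]$ permuting colours, so $\sum_r\zred = q\zred$ exactly, making the bookkeeping clean).

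**Main obstacle.** The only genuinely non-trivial ingredient is the expansion-to-bichromatic-edges bound for colourings with no strict majority — getting $b_G(\sigma) \ge \alpha(|V_G| - \max_c |U_c|)$, and in particular that the relevant cut set $S$ can be taken with $|S| \le |V_G|/2$ so that the $\alpha$-expander hypothesis (which only applies to sets of size at most $|V_G|/2$) is usable. The clean way is to order the colour classes by size and greedily accumulate them into $S$ until adding the next one would exceed $|V_G|/2$; since no class exceeds $|V_G|/2$ (no strict majority) one checks $|S| \ge |V_G|/2 - \max_c|U_c|/\,\cdots$, and a small case analysis (is the largest class already $\le |V_G|/2$? etc.) gives $|S|\ge$ a constant times $|V_G|$. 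I would lean on Lemma~12 of \cite{jenssen2020algorithms} essentially verbatim here, noting as the paper does that the bounded-degree hypothesis there is never used — only $\alpha$-expansion — so the lemma transfers directly. Everything else is the routine counting/geometric-series estimate above.
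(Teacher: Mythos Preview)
The paper does not prove this lemma at all; it simply cites Lemma~12 of \cite{jenssen2020algorithms} and remarks that the bounded-degree hypothesis there is never used, only $\alpha$-expansion. Your reconstruction of the Jenssen--Keevash--Perkins argument is the right one and matches their proof: lower-bound $\zred$ by the monochromatic colouring $\emm^{\beta|E_G|}$; for any $\sigma$ with no majority colour, apply expansion to each colour class $U_c$ (all have $|U_c|\le |V_G|/2$) and sum, using $\sum_c e_G(U_c,U_c^c)=2b_G(\sigma)$, to get $b_G(\sigma)\ge \alpha|V_G|/2$; then the crude count $q^{|V_G|}\emm^{\beta(|E_G|-\alpha|V_G|/2)}$ and the hypothesis on $\beta$ finish it.

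There is, however, a genuine misreading of the target. In this paper an $\epsilon$-approximation means $\emm^{-\epsilon}Z\le \hat Z\le \emm^{\epsilon}Z$, so an ``$\emm^{-|V_G|}$-approximation'' requires a multiplicative error of $\emm^{\pm \emm^{-n}}=1+O(\emm^{-n})$, \emph{not} $\emm^{\pm n}$ as you state in your opening and closing sentences. Your stated conclusion ``pins $q\zred$ between $\emm^{-|V_G|}Z$ and $\emm^{|V_G|}Z$'' is far too weak and would be useless in the proof of Theorem~\ref{thm:expandersampling}, where the error must fall below the user's accuracy $\epsilon\ge \emm^{-n}$. Fortunately your actual estimate $Z^{\mathrm{bal}}\le \emm^{-|V_G|}\zred$ is exactly what is needed: combined with the colour symmetry $\sum_r \zred=q\,\zred$ it gives
\[
q\,\zred \;\le\; Z_{G,q,\beta} \;=\; q\,\zred+Z^{\mathrm{bal}} \;\le\; q\,\zred\bigl(1+\tfrac{1}{q}\emm^{-|V_G|}\bigr) \;\le\; q\,\zred\cdot \emm^{\emm^{-|V_G|}},
\]
which is the correct two-sided bound. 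So the ingredients are right; only the final bookkeeping sentence needs to be tightened.
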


Consider the polymer model defined in Example~\ref{ex:Potts}. Let $\hat{\Omega}^r_{G, q}$ denote the set of all sets of mutually compatible allowed polymers, let $\hat{\mu}^r_{G, q, \beta}$ denote the Gibbs distribution of the polymer model, and let $\hat{Z}^r_{G, q, \beta}$ denote its partition function. Observe that there is a bijection between the polymer configurations of $\hat{\Omega}^{r}_{G, q}$  and the Potts configurations of $\ored$, where a colouring $\sigma\in \ored$ maps to the polymer configuration $\Gamma\in \hat{\Omega}^{r}_{G, q}$ consisting of the connected components of vertices that do not get colour $r$ under $\sigma$. Moreover, the weight of each $\Gamma \in \hat{\Omega}^{r}_{G, q}$ is closely related to the weight of the Potts configuration $\sigma \in \ored$ by
\[
\emm^{\beta |E_G|}\prod_{\gamma \in \Gamma} w_{G, \beta}(\gamma) = \emm^{\beta m_G(\sigma)}.
\]
Therefore sampling from this polymer model is equivalent to sampling from the Potts model restricted to colourings which colour more than half of the vertices with $r$.

We can now prove the following theorem, which gives an efficient algorithm for sampling from the low temperature Potts model on expanders.

\expandersampling*
\begin{proof}
Assume without loss of generality that $G$ is connected
(otherwise, consider the connected components separately).
Consider the family of polymer models defined in Example~\ref{ex:Potts}. It is computationally feasible since determining whether  $\gamma \in \calC^{r}_{G, q}$ can be done in $O(|V_\gamma|) = O(\exp(\deg_G(V_\gamma)))$ time (one just needs to check whether $|V_\gamma| < n/2$). Computing $w_{G, \beta}(\gamma) = \emm^{- \beta B_\gamma}$ can be done by examining all $O(\deg_G(V_\gamma))$ edges with endpoints in $V_\gamma$, by iterating through $V_\gamma$ and counting the neighbours of each vertex that are not in $V_\gamma$. The total running time required to do this is therefore $O(|V_\gamma|^2 \deg_H(V_\gamma)) = O(\exp(\deg_H(V_\gamma
)))$. As is shown in~\eqref{eq:wtpolymer}, since $\beta \geq \frac{3}{\alpha} \log(8 \emm^3 (q-1))$, it also satisfies the polymer sampling condition (see Definition~\ref{def:psample}) with constant $\tau = \alpha \beta$. 

The input to both algorithms is $G$ and $\epsilon$. If $\epsilon < \emm^{-n}$ then we can construct an $\epsilon$-sample from $\mu_{G, q, \beta}$ in poly$(n, \tfrac{1}{\epsilon})$ time, by brute force. If $\epsilon \geq \emm^{-n}$ then we can construct an $\epsilon$-sample from $\mu_{G, q, \beta}$ as follows. Choose a spin $r \in [q]$ uniformly at random. By Lemma~\ref{lem:pmersampling} there is an algorithm which, given $\epsilon/q$ and $G$ as input, outputs an $(\epsilon/q)$-sample from $\hat{\mu}^r_{G, q, \beta}$ (and therefore also from $\mu^r_{G, q, \beta}$) in $O\big(m \log \frac{m}{\epsilon} \log\tfrac{1}{\epsilon}\big)$ time. Since $G$ is also an $\alpha$-expander, then by Lemma~\ref{lem:gstates}, this is an $\epsilon$-sample from $\mu_{G, q, \beta}$. For the counting algorithm, if $\epsilon < \emm^{-n}$ then we can compute $Z_{G, q, \beta}$ exactly in poly$(n, \tfrac{1}{\epsilon})$ time, by brute force. If $\epsilon \geq \emm^{-n}$, then by Lemma~\ref{lem:pmersampling}, we can compute an $\epsilon$-approximation to $\zred$ in $O\big(m^2 (\log \tfrac{m}{\epsilon})^3\big)$ time. By Lemma~\ref{lem:gstates}, it follows that $q \emm^{\beta m}\cdot \zred$ is an $\epsilon$-approximation to $Z_{G, q, \beta}$.
\end{proof} 

\bibliographystyle{plain}
\bibliography{configmodel}

\end{document}